\newcommand{\s}[1]{{\textsf{#1}}}
\newcommand{\cost}{\ensuremath{\s{cost}}}
\newcommand{\Eu}[1]{\ensuremath{\EuScript{#1}}}
\newcommand{\eps}{\varepsilon}
\newcommand{\etal}{\emph{et{.}al{.}}}
\newcommand{\bb}[1]{\ensuremath{\mathbb{#1}}}
\newcommand{\all}{\ensuremath{\s{flat}}}%
\title{Approximating the Distribution of the Median and other Robust Estimators on Uncertain Data }
\titlerunning{Approximating the Distribution of the Median on Uncertain Data}
\author{Kevin Buchin}{Department of Mathematics and Computer Science, TU Eindhoven\\{[Eindhoven, The Netherlands]}}{k.a.buchin@tue.nl}{https://orcid.org/0000-0002-3022-7877}{}
\author{Jeff M. Phillips}{School of Computing, University of Utah\\{[Salt Lake City, USA]}}{jeffp@cs.utah.edu}{}{}
\author{Pingfan Tang}{School of Computing, University of Utah\\{[Salt Lake City, USA]}}{tang1984@cs.utah.edu}{}{}
\authorrunning{K. Buchin, J.\,M. Phillips, and P. Tang}
\subjclass{Theory of computation $\sim$ Computational geometry}
\keywords{Uncertain Data, Robust Estimators, Geometric Median,  Tukey Median}
\begin{document}

\maketitle

\begin{abstract}
Robust estimators, like the median of a point set, are important for data analysis in the presence of outliers. We study robust estimators for locationally uncertain points with discrete distributions. That is, each point in a data set has a discrete probability distribution describing its location. The probabilistic nature of uncertain data makes it challenging to compute such estimators, since the true value of the estimator is now described by a distribution rather than a single point.
We show how to construct and estimate the distribution of the median of a point set.  Building the approximate support of the distribution takes near-linear time, and assigning probability to that support takes quadratic time.
We also develop a general approximation technique for distributions of robust estimators with respect to ranges with bounded VC dimension.  This includes the geometric median for high dimensions and the Siegel estimator for linear regression.
 \end{abstract}

\section{Introduction}
\label{intro}
\vspace{-.1in}

Most statistical or machine learning models of noisy data start with the assumption that a data set is drawn iid (independent and identically distributed) from a single distribution.  Such distributions often represent some true phenomenon under some noisy observation. Therefore, approaches that mitigate the influence of noise, involving robust statistics or regularization, have become commonplace.

However, many modern data sets are clearly not generated iid, rather each data element represents a separate object or a region of a more complex phenomenon.  For instance, each data element may represent a distinct person in a population or an hourly temperature reading.  Yet, this data can still be noisy; for instance, multiple GPS locational estimates of a person, or multiple temperature sensors in a city.  The set of data elements may be noisy \emph{and} there may be multiple inconsistent readings of each element.  To model this noise, the inconsistent readings can naturally be interpreted as a probability distribution.

Given such locationally noisy, non-iid data sets, there are many unresolved and important analysis tasks ranging from classification to regression to summarization.
In this paper, we initiate the study of robust estimators~\cite{DH83,Rou85} on locationally uncertain data.  More precisely, we consider an input data set of size $n$, where each data point's location is described by a discrete probability distribution.  We will assume these discrete distributions have a support of at most $k$ points in $\bb{R}^d$; and for concreteness and simplicity we will focus on cases where each point has support described by exactly $k$ points, each being equally likely.

Although algorithms for locationally uncertain points have been studied in quite a few contexts over the last decade~\cite{CM08,LP09,JLP11,AHSYZ14,HL12,AESZ12,AAHPYZ13,ACTY09,ZLTZ09}
(see more through discussion in full version~\cite{PT16}),
few have directly addressed the problem of noise in the data.  As the uncertainty is often the direct consequence of noise in the data collection process, this is a pressing concern.  As such we initiate this study focusing on the most basic robust estimators: the median for data in $\bb{R}^1$, as well as its generalization the geometric median and the Tukey median for data in $\bb{R}^d$, defined in Section \ref{sec:prelim}.
Being robust refers to the fact that the median and geometric medians have a \emph{breakdown point}s of 0.5, that is, if less than $50\%$ of the data points (the outliers) are moved from the true distribution to some location infinitely far away, the estimator remains within the extent of the true distribution~\cite{RL91}.
The Tukey median has a breakdown point between $\frac{1}{d+1}$ and $\frac{1}{3}$ ~\cite{Alo06}.

In this paper, we generalize the median (and other robust estimators) to locationally uncertain data, where the outliers can occur not just among the $n$ data points, but also as part of the discrete distributions representing their possible locations.

The main challenge is in modeling these robust estimators.  As we do not have precise locations of the data, there is not a single minimizer of $\cost(x,Q)$; rather there may be as many as $k^n$ possible input point sets $Q$ (the combination of all possible locations of the data).  And the expected value of such a minimizer is not robust in the same way that the mean is not robust.  As such we build a distribution over the possible locations of these cost-minimizers.  In $\bb{R}^1$ (by defining boundary cases carefully) this distribution is of size at most $O(nk)$, the size of the input, but already in $\bb{R}^2$ it may be as large as $k^n$.

\vspace{-.1in}
\subparagraph*{Our Results.}
We design algorithms to create an approximate support of these median distributions.  We create small sets $T$ (called an \emph{$\eps$-support}) such that each possible median $m_Q$ from a possible point set $Q$ is within a distance $\eps \cdot \cost(m_Q, Q)$ of some $x \in T$. In $\bb{R}$ we can create a support set $T$ of size $O(k/\eps)$  in $O(nk \log (nk))$ time. We show that the bound $O(k/\eps)$ is tight since there may be $k$ large enough modes of these distributions, each requiring $\Omega(1/\eps)$ points to represent.
In $\bb{R}^d$ our bound on $|T|$ is $O(k^d/\eps^d)$, for the Tukey median and the geometric median.
If we do not need to cover sets of medians $m_Q$ which occur with probability less than $\eps$, we can get a bound $O(d/\eps^2)$ in $\bb{R}^d$.
In fact, this general approach in $\bb{R}^d$ extends to other estimators, including the Siegel estimator~\cite{Sie82} for linear regression.
We then need to map weights onto this support set $T$. We can do so exactly in $O(n^2 k)$ time in $\bb{R}^1$ or approximately in $O(1/\eps^2)$ time in $\bb{R}^d$.

Another goal may be to then construct a single-point estimator of these distributions: the median of these median distributions.  In $\bb{R}^1$ we can show that this process is stable up to $\cost(m_Q, Q)$ where $m_Q$ is the resulting single-point estimate.  However, we also show that already in $\bb{R}^1$ such estimators are not stable with respect to the weights in the median distribution, and hence not stable with respect to the probability of any possible location of an uncertain point.  That is, infinitesimal  changes to such probabilities can greatly change the location of the single-point estimator.  
As such, we argue the approximate median distribution (which is stable with respect to these changes) is the best robust representation of such data.

\subsection{Formalization of Model and Notation}
\label{sec:prelim}
We consider a set of $n$ locationally uncertain points $\Eu{P} = \{P_1, \ldots, P_n\}$ so that each $P_i$ has $k$ possible locations $\{p_{i,1}, \ldots, p_{i,k}\} \subset \mathbb{R}^d$.
Here, $P_i=\{p_{i,1}, \ldots, p_{i,k}\}$ is a multiset, which means a point in $P_i$ may appear more than once.
Let $P_\all = \cup_i \{p_{i,1}, \ldots, p_{i,k}\}$ represent all positions of all points in $\Eu{P}$, which implies $P_\all$ is also a multiset.
We consider each $p_{i,j}$ to be an equally likely (with probability $1/k$) location of $P_i$, and can extend our techniques to non-uniform probabilities and uncertain points with fewer than $k$ possible locations.
For an uncertain point set $\Eu{P}$ we say $Q \Subset \Eu{P}$ is a \emph{traversal} of $\Eu{P}$ if $Q = \{q_1, \ldots q_n\}$ has each $q_i$ in the domain of $P_i$ (e.g., $q_i = p_{i,j}$ for some $j$).  We denote by $\Pr_{Q \Subset \Eu{P}}[\gamma(Q)]$ the probability of the event $\gamma(Q)$, given that $Q$ is a randomly selected traversal from $\Eu{P}$, where the selection of each $q_i$ from $P_i$ is independent of $q_{i'}$ from $P_{i'}$.

We are particularly interested in the case where $n$ is large and $k$ is small.  
For technical simplicity we assume an extended RAM model where $k^n$ (the number of possible traversals of point sets) can be computed in $O(1)$ time and fits in $O(1)$ words of space.

We consider three definitions of medians.
In one dimension, given a set $Q = \{q_1, q_2, \ldots, q_n\}$ that w.l.o.g.\ satisfies $q_1 \leq q_2 \leq \ldots \leq q_n$, we define the \emph{median} $m_Q$ as $q_{\frac{n+1}{2}}$ when $n$ is odd and $q_{\frac{n}{2}}$ when $n$ is even.
There are several ways to generalize the median to higher dimensions~\cite{Alo06}, herein we focus on the geometric median and Tukey median.
Define $\cost(x,Q) = \frac{1}{n} \sum_{i=1}^n \|x - q_i\|$ where $\|\cdot\|$ is the Euclidian norm.  Given a set $Q = \{q_1, q_2, \ldots, q_n\} \subset \bb{R}^d$, the \emph{geometric median} is defined as $m_Q = \arg \min_{x \in \bb{R}^d} \cost(x,Q)$.
The Tukey depth~\cite{JWT1975} of a point $p$ with respect to a set $Q \subset \bb{R}^d$ is defined
$\mathsf{depth}_Q(p) :=\min_{H\in \mathcal{H}_p} |H\cap Q|$
where
$\mathcal{H}_p :=\{H \text{ is a closed half space in } \bb{R}^d \mid  p\in H\}$.
Then a \emph{Tukey median} of a set $Q$ is a point $p$ that can maximize the Tukey depth.

\subsection{Related Work on Uncertain Data}
\label{sec:related}
The algorithms and computational geometry communities have recently generated a large amount of research in trying to understand how to efficiently process and represent uncertain data~\cite{CM08,LP09,JLP11,AHSYZ14,HL12,KCS11a,AESZ12,AAHPYZ13,ACTY09,ADP13}, not to mention some motivating systems and other progress from the database community~\cite{1644250,drs-pddd-09,efficientquery,ZLTZ09,ABSHNSW06}.
Some work in this area considers other models, with either worst-case representations of the data uncertainty~\cite{kl-lbbsd-10} which do not naturally allow probabilistic models, or when the data may not exist with some probability~\cite{HL12,KCS11a,AHSYZ14}.  The second model can often be handled as a special case of the locationally uncertain model we study.
Among locationally uncertain data, most work focuses on data structures for easy data access~\cite{threshquery,efficientquery,TCXNKP05,ACTY09} but not the direct analysis of data.
Among the work on analysis and summarization, such as for histograms~\cite{CG09}, convex hulls~\cite{AHSYZ14}, or clustering~\cite{CM08} it usually focuses on quantities like the expected or most likely value, which may not be stable with respect to noise.  This includes estimation of the expected median in a stream of uncertain data~\cite{JMMV08} or the expected geometric median as part of $k$-median clustering of uncertain data~\cite{CM08}.
We are not aware of any work on modeling the probabilistic nature of locationally uncertain data to construct robust estimators of that data, robust to outliers in both the set of uncertain points as well as the probability distribution of each uncertain point.


\section{Constructing a Single Point Estimate}
\label{sec:single-pt}

We begin by exploring the construction of a single point estimator of set of $n$ locationally uncertain points $\Eu{P}$.  We demonstrate that while the estimator is stable with respect to the value of $\cost$, the actual minimum of that function is not stable and provides an incomplete picture for multimodal uncertainties.

It is easiest to explore this through a weighted point set $X \subset \bb{R}^1$.  Given a probability distribution defined by $\omega : X \to [0,1]$, we can compute its weighted median by scanning from smallest to largest until the sum of weights reaches $0.5$.

There are two situations whereby we obtain such a discrete weighted domain.
The first domain is the set $T$ of possible locations of medians under different instantiations of the uncertain points with weights $\hat w$ as the probability of those medians being realized; see constructions in Section \ref{Building the eps-Support T and Bounding its Size} and Section \ref{sec:weight-T}.  Let the resulting weighted median of $(T, \hat w)$ be $m_T$.
The second domain is simply the set $P_\all$ of all possible locations of $\Eu{P}$, and its weight $w$ where $w(p_{i,j})$ is the fraction of $Q \Subset \Eu{P}$ which take $p_{i,j}$ as their median (possibly $0$).  Let the weighted median of $(P_\all, w)$ be $m_{\Eu{P}}$.

\begin{theorem} \label{thm:single-pnt-err}
$|m_T - m_\Eu{P}| \leq \eps \cost(m_\Eu{P}) \leq \eps \cost(m_Q,Q)$, $Q \Subset \Eu{P}$ is any traversal with $m_\Eu{P}$ as its median.
\end{theorem}
\begin{proof}
We can divide $\bb{R}$ into $|T|$ intervals, one associated with each $x \in T$, as follows.  Each $z \in \bb{R}$ is in an interval associated with $x \in T$ if $z$ is closer to $x$ than any other point $y \in T$, unless $|z - y| \leq \varepsilon \cost(z)$ but $|z - x| > \cost(z)$.  Thus a point $p_{i,j}$ whose weight $w(p_{i,j})$ contributes to $\hat w(x)$, is in the interval associated with $x$.

Thus, if $p_{i,j} = m_\Eu{P}$, then the sum of all weights of all points greater than $p_{i,j}$ is at most $0.5$, and the sum of all weights of points less than $p_{i,j}$ is less than $0.5$.  Hence if $m_\Eu{P}$ is in an interval associated with $x \in T$, then the sum of all weights of points $p_{i,j}$ in intervals greater than that of $x$ must be at most $0.5$ and those less than that of $x$ must be less than $0.5$.  Hence $m_T = x$, and $|x - p_{i,j}| \leq \eps \cost(m_\Eu{P})$ as desired.
\end{proof}

\vspace{-.1in}
\subparagraph*{Non-Robustness of single point estimates.}
The geometric median of the set $\{m_Q \text{ is a}$ geometric median of $Q \mid Q \Subset \Eu{P}\}$ is not stable under small perturbations in weights; it stays within the convex hull of the set, but otherwise not much can be said, even in $\bb{R}^1$.  Consider the example with $n=3$ and $k=2$, where $p_{1,1} = p_{1,2} = p_{2,1} = 0$ and $p_{2,2} = p_{3,1} = p_{3,2} = \Delta$ for some arbitrary $\Delta$.  The median will be at $0$ or $\Delta$, each with probability $1/2$, depending on the location of $P_2$.
We can also create a more intricate example where $\hat \cost (0) = \hat \cost(\Delta) = 0$.
As these examples have $m_Q$ at $0$ or $\Delta$ equally likely with probability $1/2$, then canonically in $\bb{R}^1$ we would have the median of this distribution at $0$, but a slight change in probability (say from sampling) could put it all the way at $\Delta$.
This indicates that a representation of the distribution of medians as we study in the remainder is more appropriate for noisy data.

\section{Approximating the Median Distribution}
\label{sec:T}

%

The big challenge in constructing an $\eps$-support $T$ is finding the points $x \in P_\all$ which have small values of $\cost(x, Q)$ (recall $\cost(x,Q) = \frac{1}{n} \sum_{i=1}^n \|x - q_i\|$) for some $Q \Subset \Eu{P}$.  But this requires determining the smallest cost $Q \Subset \Eu{P}$ that has $x \in Q$ and $x$ is the median of $Q$.

One may think (as the authors initially did) that one could simply use a proxy function
$\hat \cost(x) = \frac{1}{n} \sum_{i=1}^n \min_{1 \leq j \leq k} \|x - p_{i,j}\|$, which is relatively simple to compute as the lower envelope of cost functions for each $P_i$.  Clearly $\hat \cost(x) \leq \cost(x,Q)$ for all $Q \Subset \Eu{P}$, so a set $\hat T$ satisfying a similar approximation for $\hat \cost$ will satisfy our goals for $\cost$.  However, there exist (rather adversarial) data sets $\Eu{P}$ where $\hat T$ would require $\Omega(nk)$ points;
see Appendix \ref{the size of hat{T}}.
On the other hand, we show this is not true for $\cost$.
The key difference between $\cost$ and $\hat \cost$ is that $\hat \cost$ does not enforce the use of some $Q \Subset \Eu{P}$ of which $x$ is a median.  That is, that (roughly) half the points are to the left and half to the right for this $Q$.

\vspace{-.1in}
\subparagraph*{Proxy functions $L$, $R$, and $D$.}
We handle this problem by first introducing two families of functions, defined precisely shortly.  We let $L_i(x)$ (resp.\ $R_i(x)$) represent the contribution to $\cost$ at $x$ from the closest possible location $p_{i,j}$ of an uncertain point $P_i$ to the left (resp.\ right) of $x$.  This allows us to decompose the elements of this cost.  However, it does not help us to enforce this balance.  Hence we introduce a third proxy function
\[
D_i(x) = L_i(x) - R_i(x)
\]
capturing the difference between $L_i$ and $R_i$.  We will show that the choice of which points are used on the left or right of $x$ is completely determined by the $D_i$ values.  In particular, we maintain the $D_i$ values (for all $i \in [n]$) in sorted order, and use the $i$ with larger $D_i$ values on the right, and smaller $D_i$ values on the left for the min cost $Q \Subset \Eu{P}$.


To define $L_i$, $R_i$, and $D_i$, we first assume that $P_\all$ and $P_i$  for all $i\in[n]$ are sorted (this would take $O(nk \log (nk))$ time).  Then to simplify definitions we add                                                                                    two dummy points to each $P_i$, and introduce the notation
$\widetilde{P}_i=P_i\cup\{p_{i,0},p_{i,k+1}\}$ and
$\widetilde{\Eu{P}}=\{\widetilde{P}_1, \widetilde{P}_2,\cdots, \widetilde{P}_n\}$,
where $p_{i,0}=\min P_\all-n \Delta$, $p_{i,k+1}=\max P_\all +n\Delta$, and $\Delta=\max P_\all-\min P_\all$.
Thus, every point $p\in P_\all$ can be viewed as the median of some traversal of $\widetilde{\Eu{P}}$.  Moreover, since we put the $p_{i,0}$ and $p_{i,k+1}$ points far enough out, they will essentially act as points at infinity and not affect the rest of our analysis.

Next, for $p\in P_\all$ we define
$\cost(p)=\min \{\cost(p,Q)\ |\ p \text{ is the median of $Q$ and } Q \Subset \widetilde{\Eu{P}} \}$.
Thus, if there exists $Q \Subset \Eu{P}$ such that $p$ is the median of $Q$, then $\cost(p)\leq \cost(p,Q)$.

Now to compute $\cost$ and expedite our analysis, for $p\in[\min P_\all-n\Delta, \max P_\all+n\Delta]$, we define
$L_i(p)=\min\{|p_i-p| \ | \ p_i\in \widetilde{P}_i \cap (-\infty, p] \}$ and
$R_i(p)=\min\{|p_i-p| \ | \ p_i\in \widetilde{P}_i \cap [p,\infty) \}.$
and recall $D_i(p) = L_i(p) - R_i(p)$. Obviously, if $p\in \widetilde{P}_i$, then $D_i(p)=L_i(p)=R_i(p)=0$.
For example, if $\widetilde{P}_i=\{p_{i,0},\ p_{i,1},\ p_{i,2},\ p_{i,3},\ p_{i,4}\}$ and $p_{i,0}<p_{i,1}<p_{i,2}<p_{i,3}<p_{i,4}$, then the plot of $L_i(p)$, $R_i(p)$ and $D_i(p)$, is shown in Figure \ref{fig0}.

\begin{figure}[t]
  \vspace{-.2in}
  \includegraphics[width=0.8\textwidth]{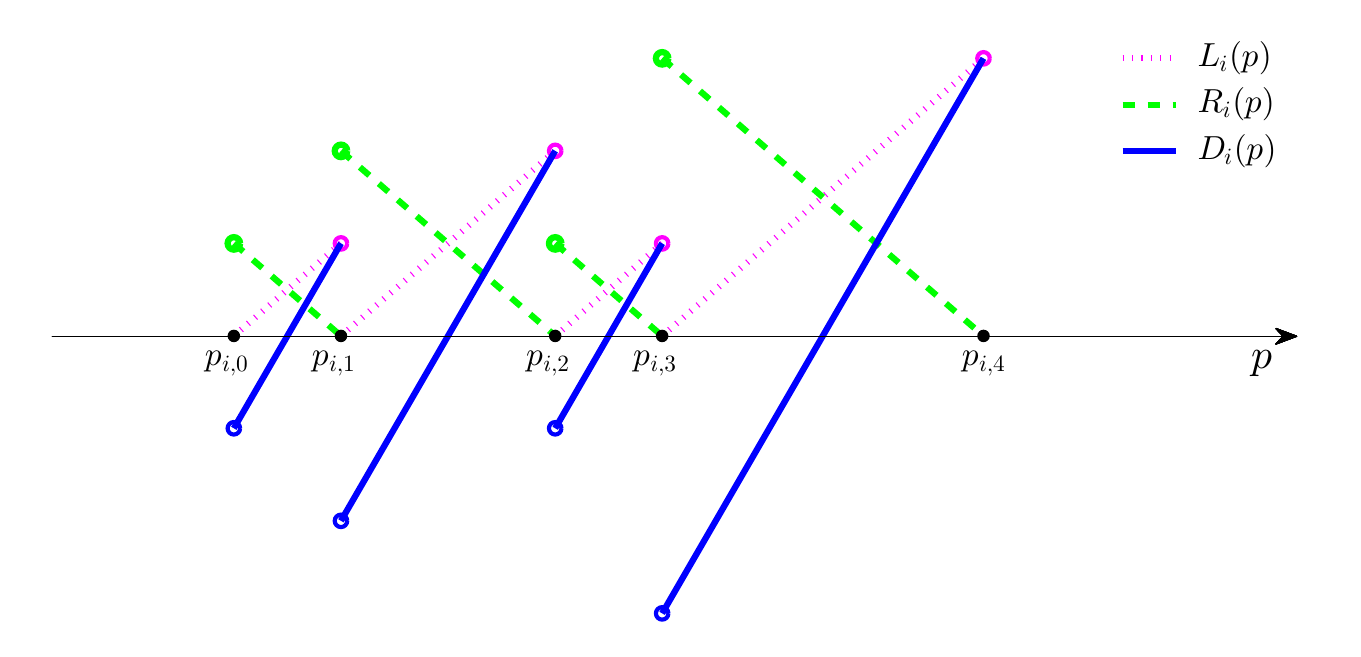}\\

  \vspace{-0.3in}
  \caption{The plot of $L_i(p)$, $R_i(p)$ and $D_i(p)$.}\label{fig0}
  \vspace{-.2in}
\end{figure}

For the sake of brevity, we now assume $n$ is odd; adjusting a few arguments by $+1$ will adjust for the $n$ is even case.

Consider next the following property of the $D_i$ functions with respect to computing $\cost(p)$ for a point $p \in P_{i_0}$.  Let $\{i_1,i_2,\cdots,i_{n-1}\}=[n]\backslash \{i_0\}$ be a permutation of uncertain points, except for $i_0$, so that $D_{i_1}(p)\leq D_{i_2}(p)\leq \cdots \leq D_{i_{n-1}}(p)$.  Then to minimize $\cost(p, Q)$, we count the uncertain points $P_{i_l}$ using $L_{i_l}$ if in the permutation $i_l \leq (n-1)/2$ and otherwise count it on the right with $R_{i_l}$.  This holds since for any other permutation $\{j_1,j_2,\cdots,j_{n-1}\}=[n]\backslash \{i_0\}$ we have
$
\sum_{l=\frac{n+1}{2}}^{n-1} D_{i_l}(p)
\geq
\sum_{l=\frac{n+1}{2}}^{n-1} D_{j_l}(p)
$
and thus
\begin{align*}
\sum_{l=1}^{\frac{n-1}{2}}L_{i_l}(p)+\sum_{l=\frac{n+1}{2}}^{n-1} R_{i_l}(p)
 &=
\sum_{l=1}^{n-1} L_{i_l}(p)-\sum_{l=\frac{n+1}{2}}^{n-1} D_{i_l}(p)
\\ &\leq
\sum_{l=1}^{n-1} L_{j_l}(p)-\sum_{l=\frac{n+1}{2}}^{n-1} D_{j_l}(p)
=
\sum_{l=1}^{\frac{n-1}{2}}L_{j_l}(p)+\sum_{l=\frac{n+1}{2}}^{n-1} R_{j_l}(p).
\end{align*}
For $p \in P_{i_0}$, $\cost(p) = \frac{1}{n}\big(\sum_{l=1}^{\frac{n-1}{2}}L_{i_l}(p)+\sum_{l=\frac{n+1}{2}}^{n-1} R_{i_l}(p)\big)$ under this $D_i$-sorted permutation.

\subsection{Computing cost}
Now to compute $\cost$ for all points $p \in P_\all$, we simply need to maintain the $D_i$ in sorted order, and then sum the appropriate terms from $L_i$ and $R_i$.  Let us first examine a few facts about the complexity of these functions.

The function $L_i$ (resp.\  $R_i$) is piecewise-linear, where the slope is always $1$ (resp.\ $-1$).  The breakpoints only occur at $x = p_{i,j}$ for each $p_{i,j} \in P_i$. Hence, they each have complexity $\Theta(k)$ for all $i \in [n]$.
The structure of $L_i$ and $R_i$ implies that $D_i$ is also piecewise-linear, where the slope is always $2$ and has breakpoints for each $p_{i,j} \in P_i$.  Each linear component attains a value $D_i(x) = 0$ when $x$ is the midpoint between two $p_{i,j}, p_{i,j'} \in P_i$ which are consecutive in the sorted order of $P_i$.

The fact that all $D_i$ have slope $2$ at all non-discontinuous points, and these discontinuous points only occur at $P_i$, implies that the sorted order of the $D_i$ functions does not change in between points of $P_\all$.  Moreover, at one of these points of discontinuity $x \in P_\all$, the ordering between $D_i$s only changes for uncertain points $D_{i'}$ such that there exists a possible location $p_{i',j} \in P_{i'}$ such that $x = p_{i',j}$.  This implies that to maintain the sorted order of $D_i$ for any $x$, as we increase the value of $x$, we only need to update this order at the $nk$ points in $P_\all$ with respect to $D_{i'}$ for which there exists $p_{i',j} \in P_{i'}$ with $p_{i',j} = x$.   This takes $O(\log (nk))$ time per update using a balanced BST, and thus $O(nk \log (nk))$ time to define $\cost(x)$ for all values $x \in \bb{R}^1$.   To compute $\cost(x)$, we also require the values of $L_i$ (or $R_i$); these can be constructed independently for each $i \in [n]$ in $O(k)$ time after sorting, and in $O(nk \log k)$ time overall.%
\footnote{
When multiple distinct $p_{i,j}$ coincide at a point $x$, then more care may be required to compute $\cost(x)$ (depending on the specifics of how the median is defined in these boundary cases).  Specifically, we may not want to set $L_i(x) = 0$, instead it may be better to use the value $R_i(x)$ even if $R_i(x) = \alpha > 0$.  This is the case when $\alpha < R_{i'}(x) - L_{i'}(x)$ for some other uncertain point $P_{i'}$ (then we say $P_i$ is on the right, and $P_{i}$ is on the left).  This can be resolved by either tweaking the definition of median for these cases, or sorting all $D_i(x)$ for uncertain points $P_i$ with some $p_{i,j} = x$, and some bookkeeping.
}
Ultimately, we arrive at the following theorem.
\begin{theorem} \label{thm:comp-cost}
Consider a set of $n$ uncertain points $\Eu{P}$ with $k$ possible locations each.  We can compute $\cost(x)$ for all $x \in \bb{R}$ such that $x = p_{i,j}$ for some $p_{i,j} \in P_\all$ in $O(nk \log (nk))$ time.
\end{theorem}

\subsection{Building the $\eps$-Support $T$ and Bounding its Size}
\label{Building the eps-Support T and Bounding its Size}

We next show that there always exists an $\eps$-support $T$ and it has a size $|T|=O(\frac{k}{\eps})$.

\begin{theorem}
\label{theorem the size of T}
Given a set of $n$ uncertain points $\Eu{P}=\{P_1,\cdots,P_n\}$, where $P_i=\{p_{i,1},\cdots,p_{i,k}\}$ $\subset\mathbb{R}$,
 and $\eps \in(0,1]$ we can construct an $\eps$-support $T$ that has a size $|T|=O(\frac{k}{\eps})$.
\end{theorem}

\begin{proof}
We first sort $P_\all$ in ascending order, scan $P_\all=\{p_1,\cdots,p_{nk}\}$ from left to right
and choose one point from $P_\all$ every $\lfloor \frac{n}{3}\rfloor$ points, and then put the chosen point into $T$.
Now, suppose $p$ is the median of some traversal $Q\Subset \Eu{P}$ and $\cost(p)=\cost(p,Q)$. If $p\notin T$, then there
are two consecutive points $t,t'$ in $T$ such that $t<p<t'$. On either side of $p$ there are at least $\lfloor \frac{n}{2} \rfloor$ points in $Q$, so without loss of generality, we assume $|p-t'|\geq \frac{1}{2}|t-t'|$.  Since $|[p,\infty)\cap Q| \geq \frac{n}{2}$ and there are at most $\lfloor \frac{n}{3}\rfloor$ points in $[p,t']$, we have $|(t',\infty)\cap Q| \geq \frac{n}{2}-\lfloor \frac{n}{3}\rfloor\geq \frac{n}{6}$, which implies
\begin{equation} \label{the lower bound of cost(p)}
\begin{split}
\cost(p)=&\cost(p,Q)\geq \frac{1}{n}\sum_{q\in (t',\infty)\cap Q}|q-p|\geq\frac{1}{n}\sum_{q\in (t',\infty)\cap Q}|t'-p|\\
\geq&\frac{1}{n}\frac{n}{6}|t'-p|=\frac{1}{6}|t'-p|\geq \frac{1}{12}|t-t'|.
\end{split}
\end{equation}
For any fixed $\varepsilon \in (0,1]$, and two consecutive points $t,t'$ $(t<t')$ in $T$, we put $x_1,\cdots,x_{\lceil \frac{12}{\varepsilon}\rceil-1}$ into $T$ where $x_i=t+\frac{|t-t'|i}{\lceil \frac{12}{\varepsilon}\rceil}$ for $1\leq i \leq \lceil \frac{12}{\varepsilon}\rceil-1$. So, for the median $p \in (t,t')$, there exists $x_{i}\in T$ s.t. $|p-x_i| \leq \frac{\varepsilon}{12}|t-t'|$, and from \eqref{the lower bound of cost(p)}, we know $|p-x_i|\leq \varepsilon \cost(p)$. In total we put $O(\frac{k}{\varepsilon})$ points into $T$; thus the proof is completed.
\end{proof}

\theoremstyle{remark}
\newtheorem{myremark}{Remark}

\begin{myremark}
The above construction results in an $\eps$-support $T$ of size $O(k/\eps)$, but does not restrict that $T \subset P_\all$.  We can enforce this restriction by for each $x$ placed in $T$ to choose the single nearest point $p \in P_\all$ to replace it in $T$.  This results in an $(2\eps)$-support, which can be made an $\eps$-support by instead adding $\lceil \frac{24}{\eps} \rceil-1$ points between each pair $(t,t')$, without affecting the asymptotic time bound.
\end{myremark}

\begin{myremark}
We can construct a sequence of uncertain data $\{\Eu{P}(n,k)\}$ such that, for each uncertain data $\Eu{P}(n,k)$, the optimal $\eps$-support $T$ has a size $\Omega(\frac{k}{\eps})$. For example, for $\eps=\frac{1}{3}, \frac{1}{5},\frac{1}{7}, \cdots$, we
define $n=\frac{1}{\eps}$, and $p_{i,j}=(j-1)n+i$ for $i\in[n]$ and $j\in[k]$.
Then, for any median $p\in P_\all$,
we have $\eps\cost(p)=\frac{2}{n^2}\sum_{i=1}^{\frac{n-1}{2}}i=\frac{n^2-1}{4n^2}<\frac{1}{4}$, hence covering no other points,
which implies $|T|=\Omega(nk)=\Omega(\frac{k}{\eps})$.
\end{myremark}

We can construct the minimal size $\eps$-support $T$ in $O(nk \log (nk))$ time by sorting, and greedily adding the smallest point not yet covered each step.  This yields the slightly stronger corollary  of Theorem \ref{theorem the size of T}.

\begin{corollary}
Consider a set of $n$ uncertain points $\Eu{P}=\{P_1,\cdots,P_n\}$, where $P_i=\{p_{i,1},\cdots,p_{i,k}\}\subset \mathbb{R}$, and $\eps \in(0,1]$.  We can construct an $\eps$-support $T$ in $O(nk \log (nk))$ time which has the minimal size for any $\eps$-support, and $|T| = O(\frac{k}{\eps})$.
\end{corollary}



There are multiple ways to generalize the notion of a median to higher dimensions~\cite{Alo06}.  We focus on two variants: the Tukey median and the geometric median.
%
We start with generalizing the notion of an $\eps$-support to a Tukey median since it more directly follows from the techniques in Theorem \ref{theorem the size of T}, and then address the geometric median.

\subsection{An $\eps$-Support for the Tukey Median}
\label{sec:Tukey}
A closely related concept to the Tukey median is a \emph{centerpoint}, which is a point $p$ such that $\mathsf{depth}_Q(p)\geq \frac{1}{d+1}|Q|$. Since for any finite set $Q\in \mathbb{R}^d$ its centerpoint always exists, a Tukey median must be a centerpoint.
This means if $p$ is the Tukey median of $Q$, then for any closed half space containing $p$, it contains at least $\frac{1}{d+1}|Q|$ points of $Q$.
Using this property, we can prove the following theorem.

\begin{theorem}
\label{theorem the epsilon support for Tukey median}
Given a set of $n$ uncertain points $\Eu{P}=\{P_1,\cdots,P_n\}$, where $P_i=\{p_{i,1},\cdots,p_{i,k}\}$ $\subset\mathbb{R}^2$,
 and $\eps \in(0,1]$, we can construct an $\eps$-support $T$ for the Tukey median on $\Eu{P}$ that has a size $|T|=O(\frac{k^2}{\eps^2})$.
\end{theorem}

\begin{proof}
Suppose the projections of $P_\all$ on $x$-axis and $y$-axis are $X$ and $Y$ respectively. We sort all points in $X$ and
choose one point from $X$ every $\lfloor \frac{n}{4}\rfloor$ points, and then put the chosen points into a set $X_T$.
For each point $x\in X_T$ we draw a line through $(x,0)$ parallel to $y$-axis.
Similarly, we sort all points in $Y$ and
choose one point every $\lfloor \frac{n}{4}\rfloor$ points, and put the chosen points into $Y_T$.
For each point $y\in Y_T$ we draw a line through $(0,y)$ parallel to $x$-axis.

Now, suppose $p$ with coordinates $(x_p,y_p)$ is the Tukey median of some traversal $Q\Subset \Eu{P}$ and
$\cost(p,Q)=\frac{1}{n}\sum_{q\in Q} \|q-p\|$. If $x_p\notin X_T$ and $y_p \notin Y_T$, then there are $x,x'\in X_T$ and $y,y' \in Y_T$ such that $x<x_p<x'$ and $y<y_p<y'$, as shown in Figure \ref{fig:TGmedians}(a).
%

Without loss of generality, we assume $|x_p-x|\geq \frac{1}{2}|x'-x|$ and $|y_p-y|\geq \frac{1}{2}|y-y'|$.
Since $p$ is the Tukey median of $Q$, we have $|Q\cap(-\infty, \infty)\times(-\infty,y_p]|\geq \frac{n}{3}$
where $(-\infty, \infty)\times(-\infty,y_p]=\{(x,y)\in \mathbb{R}^2|\ y\leq y_p\}$. Recall there are at most
$\lfloor \frac{n}{4}\rfloor$ points of $P_\all$ in $(-\infty,\infty)\times [y_p,y]$, which implies $|Q\cap(-\infty, \infty)\times(-\infty,y)|\geq \frac{n}{3}-\lfloor \frac{n}{4}\rfloor \geq \frac{n}{12}$. So, we have
\begin{equation*}
\cost(p,Q)\geq \frac{1}{n}\sum\nolimits_{q\in Q\cap(-\infty,\infty)\times (-\infty,y)}\|q-p\|\geq\frac{1}{n}\frac{n}{12}|y-y_p|\geq \frac{1}{24}|y-y'|.
\end{equation*}
Using a symmetric argument, we can obtain
$\cost(p,Q)\geq \frac{1}{24}|x-x'|$.

For any fixed $\varepsilon \in (0,1]$, and any two consecutive points $x,x'$ in $X_T$ we put
 $x_1,\cdots,x_{\lceil \frac{48}{\varepsilon}\rceil-1}$ into $X_T$ where $x_i=x+\frac{|x-x'|i}{\lceil \frac{48}{\varepsilon}\rceil}$. Also, for  any two consecutive point $y,y'$ in $Y_T$, we put
 $y_1,\cdots,y_{\lceil \frac{48}{\varepsilon}\rceil-1}$ into $Y_T$ where $y_i=y+\frac{|y-y'|i}{\lceil \frac{48}{\varepsilon}\rceil}$. So, for the Tukey median $p\in (x,x')\times (y,y')$, there exist $x_i\in X_T$
 and $y_j\in Y_T$ such that $|x_p-x_i|\leq \frac{\varepsilon}{48}|x-x'|$ and
 $|y_p-y_j|\leq \frac{\varepsilon}{48}|y-y'|$. Since we have shown that $\frac{1}{24}|x-x'|$ and $\frac{1}{24}|y-y'|$ are lower bounds for $\cost(p,Q)$, we obtain
\begin{equation*}
\begin{split}
\|(x_p,y_p)-(x_i,y_j)\|\leq &|x_p-x_i|+|y_p-y_j|\leq \frac{\varepsilon}{48}(|x-x'|+|y-y'|)\\
\leq & \frac{\varepsilon}{48}(24 \cost(p,Q)+24 \cost(p,Q))=\varepsilon \cost(p,Q).
\end{split}
\end{equation*}
Finally, we define $T$ as $T:=X_T\times Y_T$. Then for any $Q\Subset \Eu{P}$, if $p$ is the Tukey median of $Q$, there exists $t\in T$ such that $\|t-p\|\leq \varepsilon \cost(p,Q)$. Thus, $T$ is an $\varepsilon$-support for the Tukey median on $\Eu{P}$. Moreover, since $|X_T|=O(\frac{k}{\varepsilon})$ and $|Y_T|=O(\frac{k}{\varepsilon})$, we have $|T|=O(\frac{k^2}{\varepsilon^2})$.
\end{proof}

In a straight-forward extension, we can generalize the result of Theorem \ref{theorem the epsilon support for Tukey median} to $d$ dimensions.

\begin{theorem}
\label{theorem the epsilon support for Tukey median R^d}
Given a set of $n$ uncertain points $\Eu{P}=\{P_1,\cdots,P_n\}$, where $P_i=\{p_{i,1},\cdots,p_{i,k}\}$ $\subset\mathbb{R}^d$,
 and $\eps \in(0,1]$, we can construct an $\eps$-support $T$ for the Tukey median on $\Eu{P}$ that has a size $|T|=O((2d(d+1)(d+2)^2\frac{k}{\eps})^d)$.
\end{theorem}

\begin{figure}[b]
\vspace{-.2in}
  \includegraphics[width=0.3\textwidth]{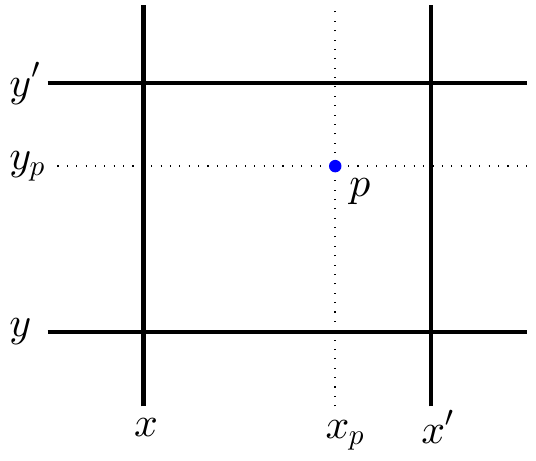}
  \;\;\;\;\;
  \includegraphics[width=0.3\textwidth]{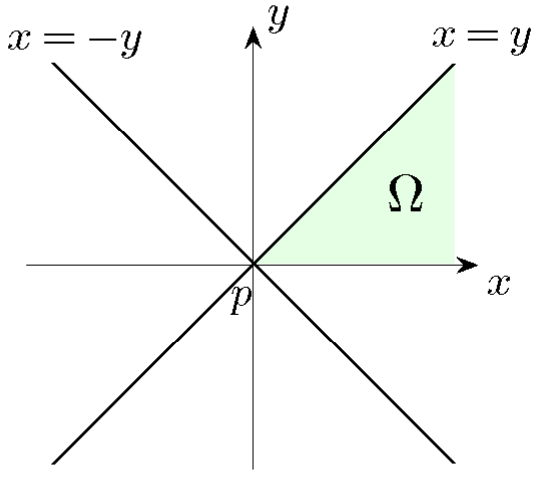}
  \;\;\;\;\;
  \includegraphics[width=0.3\textwidth]{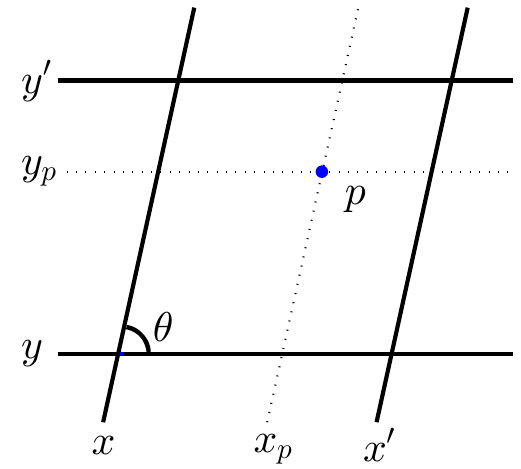}
\\
\phantom{ababababab} {\small (a) \hspace{1.6in} (b) \hspace{1.6in} (c) }

  \vspace{-0.1in}
  \caption{\label{fig:TGmedians}
    (a) Tukey median $p$ is in a grid cell formed by $x,x'$ and $y,y'$.
    (b) The plane is decomposed into 8 regions with the same shape.
    (c) Geometric median $p$ is in an oblique grid cell formed by $x,x'$ and $y,y'$.}
\end{figure}

\subsection{An $\eps$-Support for the Geometric Median}
\label{sec:geometric}
Unlike the Tukey median, 
there does not exist a constant $C>0$ such that:
for any geometric median $p$ of point set $Q\subset \mathbb{R}^d$,
any closed half space containing $p$ contains at least $\frac{1}{C}|Q|$ points of $Q$.
For example, suppose in $\mathbb{R}^2$ there are $2n+1$ points on $x$-axis with the median point at the origin; this point is also the geometric median.   If we move this point upward along the $y$ direction, then the geometric median also moves upwards.  However, for the line through the new geometric median and parallel to the $x$-axis, all $2n$ other points are under this line.

Hence, we need a new idea to adapt the method in Theorem \ref{theorem the epsilon support for Tukey median R^d} for the geometric median in $\mathbb{R}^d$.
We first consider the geometric median in $\mathbb{R}^2$. 
We show we can find \emph{some} line through it, such that on both sides of this line there are at least $\frac{n}{8}$ points.

\begin{lemma}
\label{lemma line through L1 median in R^2}
Suppose $p$ is the geometric median of $Q\subset \mathbb{R}^2$ with size $|Q|=n$.  There is a line $\ell$ through $p$ so
both closed half planes with $\ell$ as boundary contain at least $\frac{n}{8}$ points of $Q$.
\end{lemma}

\begin{proof}
We first build a rectangular coordinate system at the point $p$, which means $p$ is the origin with coordinates $(x_p,y_p)=(0,0)$.
Then we use the $x$-axis, $y$-axis and lines $x=y$, $x=-y$ to decompose the plane into eight regions, as shown in Figure~\ref{fig:TGmedians}(b). Since all these eight regions have the same shape, without loss of generality, we can assume $\Omega=\{(x,y)\in \mathbb{R}^2|\ x\geq y\geq 0\}$ contains the most points of $Q$. Then $|\Omega \cap Q|\geq \frac{n}{8}$, otherwise $n=|Q|=|\mathbb{R}^2 \cap Q|\leq 8 |\Omega \cap Q|<n$, which is a contradiction.

%

If $|Q\cap \{p\}|\geq \frac{n}{8}$, i.e., the multiset $Q$ contains $p$ at least $\frac{n}{8}$ times, then obviously this proposition is correct. So, we only need to consider the case
$|Q\cap \{p\}|< \frac{n}{8}$. We introduce notations $\widetilde{\Omega}=\Omega \setminus \{p\}$ and $\Omega^o= \Omega \setminus \partial \Omega$, and denote the coordinates of any $q \in Q$ as $q = (x_q, y_q)$.  From a property of the geometric median
(proven in Appendix \ref{A Property of Geometric Median})
we know
$\sum_{q\in Q\setminus\{p\}}\frac{x_q-x_p}{\|q-p\|} \leq |Q \cap \{p\}|$.  Since $|Q \cap \{p\}| <\frac{n}{8}$ this implies
\[
\sum\nolimits_{q\in Q\cap \widetilde{\Omega}}\frac{x_q}{\|q\|} +\sum\nolimits_{q\in Q \setminus \Omega}\frac{x_q}{\|q\|}
<\frac{n}{8},
\]
since $p$ is the origin and $Q\setminus\{p\}=(Q\cap \widetilde{\Omega}) \cup (Q\setminus \Omega) $.
From   $\frac{x_q}{\|q\|}=\frac{x_q}{\sqrt{x_q^2+y_q^2}}\geq \frac{1}{\sqrt{2}}$, $\forall q\in \widetilde{\Omega}$
we obtain
\[
|Q\cap \widetilde{\Omega}|\frac{1}{\sqrt{2}}\leq \sum\nolimits_{q\in Q\cap \widetilde{\Omega}}\frac{x_q}{\|q\|} <\frac{n}{8} - \sum\nolimits_{q\in Q \setminus \Omega}\frac{x_q}{\|q\|}\leq
  \frac{n}{8}+
| Q \setminus \Omega|\leq \frac{n}{8}+(n-| Q \cap \widetilde{\Omega}|)
\]
which implies there are not too many points in $\widetilde{\Omega}$,
\[
|Q\cap \widetilde{\Omega}|<\frac{\sqrt{2}n}{(1+\sqrt{2})} \cdot \frac{9}{8}<0.66n.
\]

Now, we define the two pairs of half spaces which share a boundary with $\widetilde{\Omega}$: $H_1^+=\{(x,y)\in \mathbb{R}^2|\ y\geq 0\}$, $H_1^-=\{(x,y)\in \mathbb{R}^2|\ y\leq 0\}$
and
$H_2^+=\{(x,y)\in \mathbb{R}^2|\ x-y\geq 0\}$, $H_2^-=\{(x,y)\in \mathbb{R}^2|\ x-y \leq 0\}$.
We assert either $|H_1^+\cap Q |\geq \frac{n}{8}$ and  $|H_1^- \cap Q |\geq \frac{n}{8}$, or $|H_2^+\cap Q |\geq \frac{n}{8}$ and  $|H_2^-\cap Q |\geq \frac{n}{8}$. Otherwise, since $|Q\cap \Omega|\geq \frac{n}{8}$ and $\Omega \subset H_1^+\cap H_2^+$,
we have  $|H_1^- \cap Q |<\frac{n}{8}$ and $|H_2^-\cap Q |< \frac{n}{8}$. From $H_1^-\cup H_2^- \cup \Omega^o=\mathbb{R}^2$ we have
\[
\begin{split}
n=&|Q|=|\mathbb{R}^2\cap Q|=|(H_1^-\cup H_2^- \cup \Omega^o)\cap Q|\leq
|H_1^-\cap Q|+| H_2^- \cap Q|+|\Omega^o\cap Q|\\
\leq&|H_1^-\cap Q|+| H_2^- \cap Q|+|\widetilde{\Omega}\cap Q|
\leq \frac{n}{8}+\frac{n}{8}+0.66n<n,
\end{split}
\]
which is a contradiction.
Therefore, among lines $\ell_1:  y=0$ and $\ell_2:  x-y=0$, which both go through $p$, one of them has at least $n/8$ points from $Q$ on both sides.
\end{proof}

\begin{theorem}
\label{theorem the epsilon support for L_1 median in R^2}
Given a set of $n$ uncertain points $\Eu{P}=\{P_1,\cdots,P_n\}$, where $P_i=\{p_{i,1},\cdots,p_{i,k}\}$ $\subset\mathbb{R}^2$,
 and $\eps \in(0,1]$, we can construct an $\eps$-support $T$ for the geometric median on $\Eu{P}$ that has a size $|T|=O(\frac{k^2}{\eps^2})$.
\end{theorem}

\begin{proof}
The idea to prove this theorem is to use several oblique coordinate systems.
We consider an oblique coordinate system, the angle between $x$-axis and $y$-axis is $\theta \in (0,\frac{\pi}{2}]$, and use the technique in Theorem  \ref{theorem the epsilon support for Tukey median} to generate a grid.
More precisely, we project $P_\all$ onto the $x$-axis
along the $y$-axis of the oblique coordinate system
to obtain a set $X$, sort all points in $X$, and choose
 one point from $X$ every $\lfloor \frac{n}{9}\rfloor$ points to form a set $X_T$.   Then we use the same method to generate $Y$ and $Y_T$
projecting along the $x$-axis in the oblique coordinate system.
For each point $x\in X_T$ we draw a line through $(x,0)$ parallel to the (oblique) $y$-axis, and
for each point $y\in Y_T$ we draw a line through $(0,y)$ parallel to the (oblique) $x$-axis.
%

Let $p$ with coordinates $(x_p,y_p)$ be the geometric median of some traversal $Q\Subset \Eu{P}$ and
$\cost(p,Q)=\frac{1}{n}\sum_{q\in Q} \|q-p\|$. If $x_p\notin X_T$ and $y_p \notin Y_T$, then there are $x,x'\in X_T$ and $y,y' \in Y_T$ such that $x_p\in(x,x')$ and $y_p\in(y,y')$, as shown in Figure \ref{fig:TGmedians}(c).

If we have the condition:
\begin{equation}  \label{conditions for L1 median}
\begin{split}
|Q\cap(-\infty, \infty)\times(-\infty,y_p]|\geq \frac{n}{8},\ \ |Q\cap(-\infty, \infty)\times[y_p,\infty)|\geq \frac{n}{8},\\
|Q\cap(-\infty,x_p]\times (-\infty, \infty)|\geq \frac{n}{8},\ \ |Q\cap[x_p,\infty)\times (-\infty, \infty)|\geq \frac{n}{8},\\
\end{split}
\end{equation}
then we can make the following computation.

Without loss of generality, we assume $|x_p-x|\geq \frac{1}{2}|x'-x|$ and $|y_p-y|\geq \frac{1}{2}|y-y'|$.
There are at most
$\lfloor \frac{n}{9}\rfloor$ points of $P_\all$ in $(-\infty,\infty)\times [y_p,y]$, which implies $|Q\cap(-\infty, \infty)\times(-\infty,y)|\geq \frac{n}{8}-\lfloor \frac{n}{9}\rfloor \geq \frac{n}{72}$.
 So, we have
\begin{equation*}
\cost(p,Q)\geq \frac{1}{n}\sum\nolimits_{q\in Q\cap(-\infty,\infty)\times (-\infty,y)}\|q-p\|\geq\frac{1}{n}\frac{n}{72}|y-y_p|\geq \frac{ \sin(\theta)}{144}|y-y'|.
\end{equation*}
 Similarly, we can prove
$\cost(p,Q)\geq \frac{ \sin(\theta)}{144}|x-x'|$.

For any fixed $\varepsilon \in (0,1]$, and any two consecutive points $x,x'$ in $X_T$ we put
 $x_1,\cdots,x_{\lceil \frac{288}{\varepsilon \sin(\theta)}\rceil-1}$ into $X_T$ where $x_i=x+\frac{|x-x'|i}{\lceil \frac{288}{\varepsilon \sin(\theta)}\rceil}$. Also, for  any two consecutive point $y,y'$ in $Y_T$, we put
 $y_1,\cdots,y_{\lceil \frac{288}{\varepsilon \sin(\theta)}\rceil-1}$ into $Y_T$ where $y_i=y+\frac{|y-y'|i}{\lceil \frac{288}{\varepsilon\sin(\theta)}\rceil}$. So, for the $L_1$ median $p\in (x,x')\times (y,y')$, there exist $x_i\in X_T$
 and $y_j\in Y_T$ such that $|x_p-x_i|\leq \frac{\varepsilon \sin(\theta)}{288}|x-x'|$ and
 $|y_p-y_j|\leq \frac{\varepsilon \sin(\theta)}{288}|y-y'|$. Since we have shown that both $\frac{\sin (\theta)}{144}|x-x'|$ and $\frac{\sin(\theta)}{144}|y-y'|$ are lower bounds for $\cost(p,Q)$, using the distance formula in an oblique coordinate system, we have
\begin{equation*}
\begin{split}
\|(x_p,y_p)-(x_i,y_j)\|\leq& ((x_p-x_i)^2+(y_p-y_j)^2+2(x_p-x_i)(y_i-y_p)\cos(\theta))^{\frac{1}{2}}\\
\leq &((x_p-x_i)^2+(y_p-y_j)^2+2|x_p-x_i||y_i-y_p|)^{\frac{1}{2}}=|x_p-x_i|+|y_p-y_j|\\
\leq &\frac{\varepsilon\sin(\theta)}{288}(|x-x'|+|y-y'|)\\
\leq & \frac{\varepsilon\sin(\theta)}{288}\left(\frac{144}{\sin(\theta)} \cost(p,Q)+\frac{144}{\sin(\theta)} \cost(p,Q)\right)=\varepsilon \cost(p,Q).
\end{split}
\end{equation*}
Therefore, if all $k^n$ geometric medians of traversals satisfy \eqref{conditions for L1 median} and $\theta \in(0,\frac{\pi}{2}]$ is a constant then
$T=X_T\times Y_T$ is an $\varepsilon$-support of size $O\left(\frac{k^2}{(\sin(\theta)\varepsilon)^2}\right)$ for the geometric median on $\Eu{P}$.

Although we cannot find an oblique coordinate system to make  \eqref{conditions for L1 median} hold for all $k^n$ medians,
we can use several oblique coordinate systems. Using the result of Lemma \ref{lemma line through L1 median in R^2},  for any geometric median of $n$ points $Q$,
we know there exists a line $\ell$ through $p$ and parallel to a line in $\{\ell_1:y=0,\ \ell_2:x-y=0,\ \ell_3: x=0,\ \ell_4:x+y=0\}$, such that in both sides of this line, there are at least $\frac{n}{8}$ points of $Q$. Since we did not make any assumption on the distribution of points in $Q$, if we rotate $\ell_1,\ell_2,\ell_3,\ell_4$  anticlockwise by $\frac{\pi}{8}$
 around the origin, we can obtain four lines  $\ell_1',\ell_2',\ell_3',\ell_4'$, and there exists a line $\ell'$ through $p$ and parallel to a line in $\{\ell_1',\ell_2',\ell_3',\ell_4'\}$, such that on both sides of this line, there are at least $\frac{n}{8}$ points of $Q$. The angle between $\ell$ and $\ell'$ is at least $\frac{\pi}{8}$.

Therefore, given $\Eu{L}=\{\ell_1,\ell_2,\ell_3,\ell_4\}$  and $\Eu{L}'=\{\ell_1',\ell_2',\ell_3',\ell_4'\}$,
for each pair $(\ell,\ell')\in \Eu{L}\times\Eu{L}'$, we take $\ell$ and $\ell'$ as $x$-axis and $y$-axis respectively to build an oblique coordinate system, and then use the above method to compute a set $T(\ell,\ell')$. Since for any geometric median $p$
there must be an oblique coordinate system based on some $(\ell,\ell')\in \Eu{L}\times\Eu{L}'$ to make \eqref{conditions for L1 median} hold for $p$, we can take $T=\cup_{\ell\in \Eu{L},\ell'\in\Eu{L}'}T(\ell,\ell')$ as an $\eps$-support for geometric median on $\Eu{P}$, and the size of $T$ is $|T|=O\left(16\frac{k^2}{(\sin(\frac{\pi}{8})\eps)^2}\right)=O\left(\frac{k^2}{\eps^2}\right)$.
\end{proof}

The result of Theorem \ref{theorem the epsilon support for L_1 median in R^2} can be generalized to $\mathbb{R}^d$ and details are in
Appendix \ref{Size bound of T in R^d}.

\subsection{Assigning a Weight to $T$ in $\mathbb{R}^1$}
\label{sec:weight-T}

Here we provide an algorithm to assign a weight to $T$ in $\bb{R}^1$,
which approximates the probability distribution of median.
For $T$ in $\mathbb{R}^d$, we provide a randomized algorithm in Section \ref{sec:rand}.

Define the weight of $p_{i,j}\in P_\all$ as $w(p_{i,j})=\frac{1}{k^n}|\{Q\Subset \Eu{P} \mid p_{i,j} \text{ is the median of Q}\}|$, the probability it is the median. Suppose $T$ is constructed by our greedy algorithm for $\bb{R}^1$.
For $p_{i,j}\in P_\all$,  we introduce a map $f_T:P_\all\rightarrow T$,
\[
f_T(p_{i,j})= \text{arg}\min\{|x-p_{i,j}| \mid x\in T, \ |x-p_{i,j}|\leq \eps \cost(p_{i,j}) \},
\]
where $\cost(p_{i,j})=\min \{\cost(p_{i,j},Q)\ |\ p_{i,j} \text{ is the median of $Q$ and } Q \Subset \Eu{P} \}$.

Intuitively, this maps each $p_{i,j} \in P_\all$ onto the closest point $x \in T$, unless it violates the $\eps$-approximation property which another further point satisfies.

Now for each $x\in T$, define weight of $x$ as
$\hat{w}(x)=\sum_{\{p_{i,j}\in P_\all \mid f_T(p_{i,j})=x\}}w(p_{i,j})$.
So we first compute the weight of each point in $P_\all$ and then obtain the weight of points in $T$ in another linear sweep.   Our ability to calculate the weights $w$ for each point in $P_\all$ is summarized in the next lemma.
The algorithm, explained within the proof, is a dynamic program that expands a specific polynomial similar to Li \etal~\cite{LSD09}, where in the final state, the coefficients correspond with the probability of each point being the median.

\begin{lemma}    \label{theorem the weight of points in P_all}
We can output $w(p_{i,j})$ for all points in $P_\all$ in $\bb{R}^1$ in $O(n^2k)$ time.
\end{lemma}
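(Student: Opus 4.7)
\medskip

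\noindent\textbf{Proof proposal.}
The plan is to express each $w(p_{i,j})$ as a single coefficient of a degree-$(n{-}1)$ polynomial in one variable, and then to compute all $nk$ such coefficients via a single sweep through $P_\all$ in sorted order while maintaining the polynomial incrementally.

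After fixing a tie-breaking rule on $P_\all$ (say lexicographic by (value, $i$)), every traversal has a unique median lying at a fixed rank $r$, namely $r=(n+1)/2$ for $n$ odd and $r=n/2$ for $n$ even. For a specific $p_{i,j}$, the event $\{m_Q=p_{i,j}\}$ holds iff $q_i=p_{i,j}$ and exactly $r-1$ of the remaining $q_{i'}$ precede $p_{i,j}$ in the tie-broken order. Setting $a_{i'}:=|\{j' : p_{i',j'}\prec p_{i,j}\}|$ and $b_{i'}:=k-a_{i'}$ for each $i'\neq i$, the number of such traversals equals the coefficient of $y^{r-1}$ in
\[
G_{i,j}(y) \;=\; \prod_{i'\neq i}(a_{i'}\,y + b_{i'}),
\]
and $w(p_{i,j})$ is this coefficient divided by $k^n$.

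To compute all $nk$ coefficients efficiently, sort $P_\all$ in $O(nk\log(nk))$ time and sweep through it in the tie-broken order while maintaining the full product $F(y)=\prod_{i'=1}^n (a_{i'}\,y+b_{i'})$, where $a_{i'}$ counts the locations of $P_{i'}$ already passed by the sweep. When the sweep reaches $p_{i,j}$, first synthetic-divide $F(y)$ by the current factor $(a_i\,y+b_i)$ associated with $P_i$, which is an $O(n)$ operation since $F$ carries this factor by construction; the quotient is exactly $G_{i,j}(y)$. Read off its $(r-1)$-th coefficient and divide by $k^n$ to obtain $w(p_{i,j})$. Then restore the invariant by multiplying $G_{i,j}(y)$ by the incremented factor $((a_i+1)\,y+(b_i-1))$, again in $O(n)$ time, so that $F$ reflects $p_{i,j}$ being absorbed into the running count for $P_i$.

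Each of the $nk$ sweep steps performs one division and one multiplication between a polynomial of degree at most $n$ and a linear factor, costing $O(n)$, for a total of $O(n^2 k)$, dominating the sort. The main obstacle will be verifying correctness of the synthetic division step, particularly in the degenerate cases $a_i=0$ (the factor is the nonzero constant $k$, so division is trivial) and $b_i=0$ (the factor is $a_i\,y$, and $F(y)$ is divisible by $y$ by construction). Writing out the explicit recurrence for the quotient coefficients handles these cases uniformly and confirms the $O(n^2 k)$ bound.
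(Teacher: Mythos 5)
Your proposal is correct and matches the paper's proof in essence: both express $w(p_{i,j})$ as the middle coefficient of a product of $n-1$ linear factors $(l_{i'}y+r_{i'})$ and maintain that product along a sorted sweep of $P_\all$, updating it by one linear division and one linear multiplication per point for $O(n)$ work each and $O(n^2k)$ total. The only cosmetic differences are that you carry the full $n$-factor product and divide out the current point's own factor (the paper directly swaps the one changed factor between consecutive points), and your explicit tie-breaking order is a slightly cleaner handling of coincident locations than the paper's $\leq$/$\geq$ counts.
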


\begin{proof}
For any $p_{i_0}\in P_{i_0}$, we define  the following terms to count the number of points to the left ($l_j$) or right ($r_j$) of it in the $j$th uncertain point (excluding $P_{i_0}$):
\[
l_j=\begin{cases}
       |\{p\in P_j \mid p\leq p_{i_0}\}|  & \text{if }  1\leq j\leq i_0-1 \\
       |\{p\in P_{j+1} \mid p\leq p_{i_0}\}|  & \text{if }  i_0\leq j\leq n-1 \\
\end{cases},
\;\;
r_j=\begin{cases}
       |\{p\in P_j \mid p\geq p_{i_0}\}|  & \text{if }  1\leq j\leq i_0-1 \\
       |\{p\in P_{j+1} \mid p\geq p_{i_0}\}|  & \text{if }  i_0\leq j\leq n-1 \\
\end{cases}.
\]
Then,
if $n$ is odd, we can write the weight of $p_{i_0}$ as
\begin{equation*}
w(p_{i_0})=\frac{1}{k^n}\sum_{\substack{S_1\cap S_2=\emptyset\\ S_1\cup S_2=\{1,\cdots,n-1\} }}
(l_{i_1} \cdot l_{i_2}\cdot \ldots \cdot l_{i_{\frac{n-1}{2}}} \cdot r_{j_1}\cdot r_{j_2} \cdot \ldots \cdot r_{j_{\frac{n-1}{2}}}),
\end{equation*}
where $S_1=\{i_1,i_2,\cdots,i_{\frac{n-1}{2}}\}$ and $S_2=\{j_1,j_2,\cdots,j_{\frac{n-1}{2}}\}$.
This sums over all partitions $S_1, S_2$ of uncertain points on the left or right of $p_{i_0}$ for which it is the median, and each term is the product of ways each uncertain point can be on the appropriate side.
We define $w(p_{i_0})$ similarly when $n$ is even, then the last index of $S_2$ is $j_{\frac{n}{2}}$.

We next describe the algorithm for $n$ odd; the case for $n$ even is similar. To compute $\sum_{\substack{S_1\cap S_2=\emptyset\\ S_1\cup S_2=\{1,\cdots,n-1\} }}
(l_{i_1} \cdot l_{i_2}\cdot \ldots \cdot l_{i_{\frac{n-1}{2}}} \cdot r_{j_1}\cdot r_{j_2} \cdot \ldots \cdot r_{j_{\frac{n-1}{2}}})$, we consider the following polynomial:
\begin{equation}\label{polynomial}
(l_1x+r_1)(l_2x+r_2)\cdots (l_{n-1}x+r_{n-1}),
\end{equation}
where $\sum_{\substack{S_1\cap S_2=\emptyset\\ S_1\cup S_2=\{1,\cdots,n-1\} }}
(l_{i_1} \cdot l_{i_2}\cdot \ldots \cdot l_{i_{\frac{n-1}{2}}} \cdot r_{j_1}\cdot r_{j_2} \cdot \ldots \cdot r_{j_{\frac{n-1}{2}}})$ is the coefficient of $x^{\frac{n-1}{2}}$. We define $\rho_{i,j}$ $(1\leq i\leq n-1, 0\leq j \leq i)$ as the coefficient of $x^j$ in the polynomial
$(l_1x+r_1)\cdots (l_{i}x+r_{i})$ and then it is easy to check $\rho_{i,j}=l_i\rho_{i-1,j-1}+r_i\rho_{i-1,j}$.   Thus we can use dynamic programming to compute $\rho_{n-1,0},\rho_{n-1,1},\cdots,\rho_{n-1,n-1}$, as shown in Algorithm~\ref{algorithm for coefficient}.

\begin{algorithm}
\caption{Compute  $\rho_{n-1,0},\rho_{n-1,1},\cdots,\rho_{n-1,n-1}$}
\label{algorithm for coefficient}
\begin{algorithmic}
\STATE {Let $\rho_{1,0}=r_1,\rho_{1,1}=l_1,\rho_{1,2}=0$.}
\FOR {$i=2$ to $n-1$}
\FOR {$j=0$ to $i$}
\STATE $\rho_{i,j}=l_i\rho_{i-1,j-1}+r_i\rho_{i-1,j}$
\ENDFOR
\STATE $\rho_{i,i+1}=0$
\ENDFOR
\RETURN $\rho_{n-1,0},\rho_{n-1,1},\cdots,\rho_{n-1,n-1}$.
\end{algorithmic}
\end{algorithm}

Thus Algorithm~\ref{algorithm for coefficient} computes the weight $\frac{1}{k^n} w(p_{i_0}) = \rho_{n-1,\frac{n-1}{2}}$ for a single $p_{i_0} \in P_\all$.  Next we show, we can reuse much of the structure to compute the weight for another point; this will ultimately shave a factor $n$ off of running Algorithm~\ref{algorithm for coefficient} $nk$ times.

Suppose for $p_{i_0}\in P_{i_0}$ we have obtained $\rho_{n-1,0},\rho_{n-1,1},\ldots,\rho_{n-1,n-1}$ by Algorithm~\ref{algorithm for coefficient}, and then we consider $p_{i_0'}=\min\{p\in P_\all \setminus P_{i_0} \mid p\geq p_{i_0}\}$. We assume $p_{i_0'}\in P_{i_0'}$, and if $i_0'<i_0$, we construct a polynomial
\begin{equation}  \label{polynomial i'<i}
(l_1x+r_1)\cdots (l_{i_0'-1}x+r_{i_0'-1})(\tilde{l}_{i_0'}x+\tilde{r}_{i_0'})(l_{i_0'+1}x+r_{i_0'+1})\cdots (l_{n-1}x+r_{n-1})
\end{equation}
and if $i_0'>i_0$, we construct a polynomial
\begin{equation}  \label{polynomial i'>i}
(l_1x+r_1)\cdots (l_{i_0'-2}x+r_{i_0'-2})(\tilde{l}_{i_0'-1}x+\tilde{r}_{i_0'-1})(l_{i_0'}x+r_{i_0'})\cdots (l_{n-1}x+r_{n-1})
\end{equation}
where $\tilde{l}_{i_0'}=\tilde{l}_{i_0'-1}=|\{p\in P_{i_0} \mid p\leq p_{i_0'}\}|$ and $\tilde{r}_{i_0'}=\tilde{r}_{i_0'-1}=|\{p\in P_{i_0} \mid p\geq p_{i_0'}\}|$.

Since \eqref{polynomial} and \eqref{polynomial i'<i} have only one different factor, we
obtain the coefficients of  \eqref{polynomial i'<i} from the coefficients of \eqref{polynomial} in $O(n)$ time.  We recover the coefficients of $(l_1x+r_1)\cdots (l_{i'-1}x+r_{i'-1})(l_{i_0'+1}x+r_{i_0'+1})\cdots (l_{n-1}x+r_{n-1})$ from  $\rho_{n-1,0},\rho_{n-1,1},\cdots,\rho_{n-1,n-1}$, and then use these coefficients to compute the coefficients of \eqref{polynomial i'<i}.
Similarly,
if $i_0'>i_0$, we
obtain the coefficients of  \eqref{polynomial i'>i} from the coefficients of \eqref{polynomial}.
Therefore, we can use $O(n^2)$ time to compute the weight of the first point in $P_\all$ and then use $O(n)$ time to compute the weight of each other point. The whole time is $O(n^2)+nkO(n)=O(n^2k)$.
 \end{proof}

\begin{corollary}
We can assign $\hat w(x)$ to each $x \in T$ in $\bb{R}^1$ in $O(n^2 k)$ time.
\end{corollary}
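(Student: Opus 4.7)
The plan is to combine Lemma~\ref{theorem the weight of points in P_all} with a single linear sweep that both evaluates $f_T$ on every $p_{i,j}\in P_\all$ and accumulates the induced weights onto $T$. Concretely, I would first invoke Lemma~\ref{theorem the weight of points in P_all} to obtain $w(p_{i,j})$ for every one of the $nk$ points of $P_\all$ in $O(n^2k)$ time; this is the dominant cost.

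Next, I would exploit the fact that by the time we reach this step, both $P_\all$ and $T$ are already available in sorted order: $P_\all$ was sorted during the construction of $T$ in Section~\ref{sec:TinR1}, and $T\subset P_\all$ was produced by the greedy left-to-right scan, hence is also sorted. I would then merge-scan the two sorted lists. While walking through $P_\all$ in increasing order, I maintain a pointer to the largest element of $T$ that lies at or below the current $p_{i,j}$; together with its successor in $T$ this gives the two candidates $x$ and $y$ in the definition of $f_T$. Evaluating the case split in the definition of $f_T(p_{i,j})$ requires only a constant number of comparisons and one lookup of $\hat{\cost}(y)$, the latter being available in $O(1)$ since $\hat{\cost}$ has already been tabulated on $P_\all\supset T$ during the construction of $T$. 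As soon as $f_T(p_{i,j})$ is determined, I add $w(p_{i,j})$ to a running accumulator $\hat w(f_T(p_{i,j}))$.

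This sweep costs $O(nk + |T|) = O(nk)$ time and uses $O(|T|)$ additional space, so the overall running time is bounded by the weight computation, giving the claimed $O(n^2k)$ bound. Correctness is immediate from the definition $\hat w(x)=\sum_{\{p_{i,j}\in P_\all \mid f_T(p_{i,j})=x\}} w(p_{i,j})$, since each $p_{i,j}$ contributes to exactly the accumulator selected by $f_T$.

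There is no real obstacle here beyond bookkeeping: the nontrivial content is Lemma~\ref{theorem the weight of points in P_all}, which we are citing, and the only thing to verify is that $f_T$ can be evaluated without disturbing the $O(n^2k)$ bound. The mild point worth stating explicitly in the write-up is that $\hat{\cost}$ values on $T$ are already precomputed, so the conditional test in the definition of $f_T$ is genuinely $O(1)$ per query, which is why the aggregation does not inflate the running time.
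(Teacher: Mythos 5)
Your proposal matches the paper's (implicit) argument exactly: the paper likewise computes $w(p_{i,j})$ for all of $P_\all$ via Lemma~\ref{theorem the weight of points in P_all} in $O(n^2k)$ time and then aggregates onto $T$ with ``another linear sweep,'' which is precisely your merge-scan evaluating $f_T$ in $O(1)$ per point. Correct, and essentially the same approach.
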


\section{A Randomized Algorithm to Construct a Covering Set}
\label{sec:rand-alg}
In this section we describe a much more general randomized algorithm for robust estimators on uncertain data.
It constructs an approximate covering set of the support of the distribution of the estimator, and estimates the weight at the same time.  The support of the distribution is not as precise compared to the techniques in the previous section in that the new technique may fail to cover regions with small probability of containing the estimator.

Suppose $\Eu{P}=\{P_1,P_2,\cdots,P_n\}$ is a set of uncertain data, where for $i\in[n]$, $P_i=\{p_{i,1},p_{i,2},\cdots$, $p_{i,k}\}$ $\subseteq \Eu{X}$ for some domain $\Eu{X}$. An estimator $E:\ \{Q\ |\ Q\Subset \Eu{P}\} \mapsto Y$
maps $Q \Subset \Eu{P}$ to a metric space $(Y,\varphi)$.
Let $B(y,r)=\{y'\in Y|\ \varphi(y,y')\leq r\}$ be a ball of radius $r$ in that metric space.  We denote $\nu$ as the VC-dimension of the range space $(Y,\Eu{R})$ induced by these balls, with $\Eu{R}=\{B(y,r) \mid y\in Y, r\geq 0\}$.

We now analyze the simple algorithm which randomly instantiates traversals $Q \Subset \Eu{P}$, and constructors their estimators $z = E(Q)$.  Repeating this $N$ times builds a domain $T = \{z_1, z_2, \ldots, z_N\}$ each with weight $w(z_i) = 1/N$.  Duplicates of domain points can have their weights merged as described in Algorithm \ref{alg:ran-weights1}.
%
%
%
%
\begin{algorithm}[H]
\caption{Approximate the weight of points in $T$}  \label{alg:ran-weights1}
\begin{algorithmic}
\STATE Initialize $T = \emptyset$
\FOR {$j=1$ to $N$}
\STATE Randomly choose $Q \Subset \Eu{P}$, and set $z = E(Q)$.
\STATE \textbf{if} $z  = z'$ for some $z' \in T$, \textbf{then} increment $c_{z'} = c_{z'}+1$
\STATE \textbf{else}  add $z$ to $T$, and set $c_z = 1$.
\ENDFOR
\RETURN $\frac{c_z}{N}$ as the approximate value of $w(z)$ for all $z \in T$
\end{algorithmic}
\end{algorithm}

\begin{theorem}  \label{thm:weight}
For $\eps >0$ and $\delta \in (0,1)$, set $N = O((1/\eps^2) (\nu + \log(1/\delta)))$.  
Then, with probability at least $1-\delta$, for any $B \in \Eu{R}$ we have
$
\left| \sum_{z \in T \cap B} w(z)  - \Pr_{Q \Subset \Eu{P}}[E(Q) \in B] \right| \leq \eps.
$
\end{theorem}

\begin{proof}
Let $T^*$ be the true support of $E(Q)$ where $Q \Subset P$, and let $w^* : T^* \to \bb{R}^+$ be the true probability distribution defined on $T^*$; e.g., for discrete $T^*$, then for any $z' \in T^*$,  $w^*(z') = \Pr_{Q \Subset P}[E(Q) = z']$.  Then each random $z$ generated is a random draw from $w^*$.  Hence for a range space with bounded VC-dimension~\cite{VC71} $\nu$, we can apply the sampling bound~\cite{LLS01} for $\eps$-approximations of these range spaces to prove our claim.
\end{proof}

In Theorem \ref{thm:weight}, for $z_i\in T$, if we choose $B = B(z_i, r) \in \Eu{R}$ with $r$ small enough such that $T\cap B$ only contains $z_i$, then we obtain the following.

\begin{corollary} \label{cor:weight}
For $\eps >0$ and $\delta \in (0,1)$, set $N = O((1/\eps^2) (\nu + \log(1/\delta)))$.
Then, with probability at least $1-\delta$, for any $z \in Y$ we have
$
\left| w(z)  - \Pr_{Q \Subset \Eu{P}}[E(Q) = z] \right| \leq \eps.
$
\end{corollary}

\begin{myremark}
We can typically define a metric space $(Y, \varphi)$ where $\nu = O(1)$; for instance for point estimators (e.g., the geometric median), define a projection into $\bb{R}^1$ so no $z_i$s map to the same point, then define distance $\varphi$ as restricted to the distance along this line, so metric balls are intervals (or slabs in $\bb{R}^d$);  these have $\nu = 2$.
\end{myremark}


\subsection{Application to Geometric Median}
\label{sec:rand}
For each $Q \Subset \Eu{P}$, the geometric median $m_Q$ may take a distinct value.  Thus even calculating that set, let alone their weights in the case of duplicates, would require at least $\Omega(k^n)$ time.
But it is straightforward to apply this randomized approach.
For $P_\all \in \bb{R}^d$, the natural metric space $(Y, \varphi)$ is $Y = \bb{R}^d$ and $\varphi$ as the Euclidian distance.

However, there is no known closed form solution for the geometric median; it can be computed within any additive error $\phi$ through various methods~\cite{Wei37,CT90,BMM03,ARR98}.  As such, we can state a slightly more intricate corollary.

\begin{corollary}
Set $\eps >0$ and $\delta \in (0,1)$ and $N = O((1/\eps^2) (d + \log(1/\delta)))$.  For an uncertain point set $\Eu{P}$ with $P_\all \subset \bb{R}^d$, let the estimator $E$ be the geometric median, and let $E_\phi$ be an algorithm that finds an approximation to the geometric median within additive error $\phi > 0$.  Run the algorithm using $E_\phi$.  Then for any ball $B = B(x,r) \in \Eu{R}$, there exists%
\footnote{
To simplify the discussion on degenerate behavior, define ball $B'$, so any point $q$ on its boundary can be defined inside or outside of $B$, and this decision can be different for each $q$, even if they are co-located.
}
 another ball $B' = B(x,r')$ with $|r-r'| \leq \phi$ such that with probability at least $1-\delta$,
\[
\Big| \sum_{z \in T \cap B'} w(z)  - \Pr_{Q \Subset \Eu{P}}[E(Q) \in B] \Big| \leq \eps.
\]
\end{corollary}

\subsection{Application to Siegel Estimator}
The Siegel (repeated median) estimator~\cite{Sie82} is a robust estimator $S$ for linear regression in $\bb{R}^2$ with optimal breakdown point $0.5$.  For a set of points $Q$, for each $q_i \in Q$ it computes slopes of all lines through $q_i$ and each other $q' \in Q$, and takes their median $a_i$.  Then it takes the median $a$ of the set $\{a_i\}_i$ of all median slopes.  The offset $b$ of the estimated line $\ell : y = a x + b$, is the median of $(y_i - a x_i)$ for all points $q_i = (x_i, y_i)$.
For uncertain data $P_\all \subset \bb{R}^2$, we can directly apply our general technique for this estimator.

We use the following metric space $(Y,\varphi)$.  Let
$Y = \{\ell \mid \ell \text{ is a line in } \mathbb{R}^2 \text{ with form } y=ax+b, \text{ where } a,b \in \mathbb{R}\}$.  Then let $\varphi$ be the Euclidean distance in the standard dual; for two lines $\ell : y = ax+b$ and $\ell' : y = a' x + b'$, define
$
\varphi(\ell, \ell') = \sqrt{(a-a')^2 + (b-b')^2}.
$
By examining the dual space, we see that $(Y,\Eu{R})$ with $\Eu{R} = \{B(\ell, r) \mid \ell \in Y, r \geq 0\}$ and $B(\ell,r) = \{\ell' \in Y \mid \varphi(\ell,\ell') \leq r\}$ has a VC-dimension $3$.

From the definition of the Siegel estimator~\cite{Sie82}, there can be at most $O(n^3 k^3)$ distinct lines in $T = \{S(Q) \mid Q \Subset \Eu{P}\}$.  By Corollary \ref{cor:weight}, setting $N = O((1/\eps^2) \log(1/\delta))$, then with probability at least $1-\delta$ for all $z \in T$ we have
$
\Big| w(z) - \Pr_{Q \Subset \Eu{P}} [S(Q) = z] \Big| \leq \eps.
$


\section{Conclusion}
We initiate the study of robust estimators for uncertain data, by studying the median, as well as extensions to the geometric median and Siegel estimators, on locationally uncertain data points.  We show how to efficiently create approximate distributions for the location of these medians in $\bb{R}^1$.  We generalize these approaches to robust estimators associated with bounded VC-dimension range spaces in a general metric space.  We also argue that although we can use such distributions to calculate a single-point representation of these distributions, it is not very stable to the input distributions, and serves as a poor representation when the true scenario is multi-modal; hence further motivating our distributional approach.

\subparagraph*{Acknowledgements. }
The authors would like to thank anonymous reviewers for helping simplify some proofs, and for the Shonan Village Center where some of this work took place.

\newpage
\appendix%

\section{The Size of $\hat{T}$ Based on $\hat{\cost}$}  \label{the size of hat{T}}

For a given positive number $\varepsilon$ and a set of uncertain points $\Eu{P} = \{P_1, \cdots, P_n\}$ where $P_i=\{p_{i,1},\cdots,p_{i,k}\}\subset \mathbb{R}$, $i\in[n]$, if we define
$\hat \cost(x) = \frac{1}{n} \sum_{i=1}^n \min_{1 \leq j \leq k} |x - p_{i,j}|$ and try to find a set $\hat{T}$ such that for any $Q\Subset \Eu{P}$ , there exists $x\in \hat{T}$ s.t. $|x-m_Q|\leq \varepsilon \hat{\cost}(m_Q)$,  then for some fixed $\varepsilon>0$, the size of $\hat{T}$ may satisfy $|\hat{T}|=\Omega(nk)$.

In fact, for this data set: $\eps=\frac{1}{4}$, $k=2$, $p_{i,1}=1-\frac{1}{2^{i-1}}$ and $p_{i,2}=1$ for all $i\in[n]$,
 we have
\begin{equation*}
\begin{split}
\hat{\cost}(p_{i,1})=&\frac{1}{n}\left(\sum_{j=1}^{i-1}(p_{j,2}-p_{i,1})+\sum_{j=i+1}^n(p_{j,1}-p_{i,1})\right)\\
=&\frac{1}{n}\left(\sum_{j=1}^{i-1}\big(1-(1-\frac{1}{2^{i-1}})\big)+\sum_{j=i+1}^n\big(1-\frac{1}{2^{j-1}}-(1-\frac{1}{2^{i-1}})\big)\right)\\
=&\frac{1}{2^{i-1}}+\frac{1}{n}\big(\frac{1}{2^{n-1}}-2\frac{1}{2^{i-1}}\big)<\frac{1}{2^{i-1}},
\end{split}
\end{equation*}
which implies
\begin{equation*}
\varepsilon\hat{\cost}(p_{i,1})+\varepsilon\hat{\cost}(p_{i+1,1})<\frac{1}{4}\frac{1}{2^{i-1}}+\frac{1}{4}\frac{1}{2^i}
<\frac{1}{2^i}=p_{i+1,1}-p_{i,1}.
\end{equation*}
So we have $[p_{i,1}-\varepsilon\hat{\cost}(p_{i,1}),p_{i,1}+\varepsilon\hat{\cost}(p_{i,1})]
\cap[p_{i+1,1}-\varepsilon\hat{\cost}(p_{i+1,1}),p_{i+1,1}+\varepsilon\hat{\cost}(p_{i+1,1})]=\emptyset$ for $i\in[n]$, which implies $|\hat{T}|\geq n$.

Now, if we consider $n=1,2,3,\cdots$, $k=2,4,6,\cdots$ and $p_{i,j}=\frac{1}{2}(3j-1)-\frac{1}{2^{i-1}}$, $p_{i,j+1}=\frac{1}{2}(3j-1)$
for $j=1,3,5, \cdots k-1$ and $i\in[n]$, then is easy to check $|\hat{T}|\geq \frac{1}{2}kn$. Therefore, we have $|\hat{T}|=\Omega(nk)$.

\section{A Property of Geometric Median}
\label{A Property of Geometric Median}

To prove the result of Lemma \ref{lemma line through L1 median in R^2}, we need the following property of geometric median.  Although this result is stated on Wikipedia, we have not found a proof in the literature, so we present it here for completeness.

\begin{lemma} \label{A Property of Geometric Median}
Suppose $p$ is the geometric median of $Q=\{q_1,\cdots,q_n\}\subset \mathbb{R}^d$, and $(x_1,\cdots,x_d)$ and $(x_{i,1},\cdots,x_{i,d})$ are the coordinates of $p$ and $q_i$ respectively, then we have
$|\sum_{q_i\in Q\setminus\{p\}}\frac{x_j-x_{i,j}}{\|q-p\|}| \leq |Q \cap \{p\}|$ for any $j\in[d]$.
\end{lemma}

\begin{proof}

We introduce the notation $f(y)=f_1(y)+f_2(y)$ where $f_1(y)=\sum_{q\in Q\setminus\{p\}}\|q-y\|$ and $f_2(y)=\sum_{q\in Q\cap\{p\}}\|q-y\|$. Suppose $v_j\in \mathbb{R}^d$ is a vector such that its $j$-th component is one and all other components are zero.
 Since $p$ is the global minimum point of $f$, for any $j\in [d]$ there exists $\delta_j>0$ such that
\begin{equation*}
f(p+\varepsilon v_j)\geq f(p) \text{ and } f(p-\varepsilon v_j)\geq f(p),\ \ \ \forall \ \varepsilon\in[0,\delta_j),
\end{equation*}
which implies
\begin{equation} \label{f1+f2>=f2 1}
f_1(p+\varepsilon v_j)+f_2(p+\varepsilon v_j)\geq f_1(p)+f_2(p), \ \  \forall \ \varepsilon\in[0,\delta_j),
\end{equation}
and
\begin{equation} \label{f1+f2>=f2 2}
f_1(p-\varepsilon v_j)+f_2(p-\varepsilon v_j)\geq f_1(p)+f_2(p), \ \  \forall \ \varepsilon\in[0,\delta_j).
\end{equation}

Since $f_2(p)=0$, from \eqref{f1+f2>=f2 1} we have
$\frac{1}{\varepsilon} (f_1(p+\varepsilon v_j)-f_1(p))\geq -\frac{1}{\varepsilon}f_2(p+\varepsilon v_j)=-|Q\cap\{p\}|$.
Letting $\varepsilon\rightarrow 0+$, we obtain $\frac{\partial f_1(p)}{\partial x_j}\geq -|Q\cap\{p\}|$ which implies
\begin{equation}  \label{lower bound of sum frac{x_j-x_{i,j}}{|q-p|}}
\sum_{q_i\in Q\setminus\{p\}}\frac{x_j-x_{i,j}}{\|q-p\|}\geq -|Q\cap\{p\}|.
\end{equation}
Similarly, using \eqref{f1+f2>=f2 2} we can obtain
\begin{equation}  \label{upper bound of sum frac{x_j-x_{i,j}}{|q-p|}}
\sum_{q_i\in Q\setminus\{p\}}\frac{x_j-x_{i,j}}{\|q-p\|}\leq |Q\cap\{p\}|.
\end{equation}

Thus, conclusion of this lemma is obtained from \eqref{lower bound of sum frac{x_j-x_{i,j}}{|q-p|}}
and \eqref{upper bound of sum frac{x_j-x_{i,j}}{|q-p|}}.
\end{proof}

The bound in Lemma \ref{A Property of Geometric Median} is tight. For example, we consider $Q=\{(-2,0), (-1,0),$ $(0,0), (1,0),(-1,1),(-1,-1)\}\subset \mathbb{R}^2$, then $p=(-1,0)$ is the geometric median of $Q$ and $|\sum_{q=(x_q,y_q)\in Q\setminus\{p\}}\frac{-1-x_q}{\|q-p\|}|=1=|Q\cap{p}|$.

\section{Size bound of $T$ in $\mathbb{R}^d$} \label{Size bound of T in R^d}

Using the method in the proof of Theorem \ref{theorem the epsilon support for L_1 median in R^2}, we can generalize
 the result of this theorem to $\mathbb{R}^3$ and higher dimensional space.

\begin{theorem}
\label{theorem the epsilon support for L_1 median in R^3}
Given a set of $n$ uncertain points $\Eu{P}=\{P_1,\cdots,P_n\}$, where $P_i=\{p_{i,1},\cdots,p_{i,k}\}$ $\subset\mathbb{R}^3$,
 and $\eps \in(0,1]$, we can construct an $\eps$-support $T$ for $L_1$ median on $\Eu{P}$ that has a size $|T|=O\left(\frac{k^3}{\eps^3}\right)$.
\end{theorem}

\begin{proof}

The first step is to obtain a result similar to Lemma \ref{lemma line through L1 median in R^2}:
if $p$ is the $L_1$ median of a set of $n$ points $Q\subset \mathbb{R}^3$, then we can find a plane $h$ through $p$, such that any closed half space with $h$ as its boundary contains at least $\frac{n}{24}$ points of $Q$.

To prove this, we build a rectangular coordinate system at the point $p$, and use nine planes
$\Eu{H}_3=\{x_1=0, x_2=0, x_3=0, x_1\pm x_2=0, x_2\pm x_3=0, x_3\pm x_1=0\}$ to partition
$\mathbb{R}^3$ into 24 regions: $\{\Omega_{i,\mathbf{s}}|\ i\in\{1,2,3\}, \mathbf{s}\in\{1,-1\}^3\}$,
where $\Omega_{i,\mathbf{s}}=\Omega_{i,(s_1,s_2,s_3)}:=\{(x_1,x_2,x_3)\in \mathbb{R}^3|\ s_ix_i\geq s_jx_j\geq 0,\text{ for } j=1,2,3\}$. All of these regions have the same shape with
$\Omega_{1,(1,1,1)}=\{(x_1,x_2,x_3)\in \mathbb{R}^3|x_1\geq x_2\geq0,x_1\geq x_3\geq 0 \}$, which means they can coincide with each other through rotation, shift and reflection.
So, we define $\Omega=\Omega_{1,(1,1,1)}$ and without loss of generality assume
$|Q\cap \Omega|=\max_{i\in[3], \mathbf{s}\in \{1,-1\}^3} |Q\cap \Omega_{i,\mathbf{s}}|$. Obviously, we have
$|Q\cap \Omega|\geq \frac{n}{24}$.

We only need to consider the case $|Q\cap\{p\}|<\frac{n}{24}$.
Introducing notations $\widetilde{\Omega}=\Omega \setminus \{p\}$, $\Omega^o= \Omega \setminus \partial \Omega$, from the property of $L_1$ median we know
$\sum_{q\in Q\setminus\{p\}}\frac{x_{q,1}-x_{p,1}}{\|q-p\|} \leq |Q \cap \{p\}|<\frac{n}{24}$. Since $p$ is the origin, we have$
\sum\nolimits_{q\in Q\cap \widetilde{\Omega}}\frac{x_{q,1}}{\|q\|} +\sum\nolimits_{q\in Q \setminus \Omega}\frac{x_{q,1}}{\|q\|}
<\frac{n}{24}$, which implies
$|Q\cap \widetilde{\Omega}|\frac{1}{\sqrt{3}}<\frac{n}{24}+
| Q \setminus \Omega|\leq \frac{n}{24}+(n-| Q \cap \widetilde{\Omega}|)
$
since $\frac{x_{q,1}}{\|q\|}\leq \frac{1}{\sqrt{3}}$, for all $ q\in \widetilde{\Omega} $.
Thus, we obtain
\begin{equation}  \label{size of Q cap widetilde{Omega} in R3}
|Q\cap \widetilde{\Omega}|<\frac{\sqrt{3}n}{1+\sqrt{3}}\cdot\frac{25}{24}<0.67n.
\end{equation}

Now, for $x=(x_1,x_2,x_3)\in \mathbb{R}^3$ we define $h_1(x)=x_1-x_2$, $h_2(x)=x_1-x_3$, $h_3(x)=x_2$, $h_4(x)=x_3$, and $H_i^+=\{x\in \mathbb{R}^3|\ h_i(x)\geq 0\}$,
$H_i^--=\{x\in \mathbb{R}^3|\ h_i(x)\leq 0\}$,
and assert there exists $i\in[4]$ such that
$|H_i^+\cap Q |\geq \frac{n}{24}$ and  $|H_i^- \cap Q |\geq \frac{n}{24}$.
Otherwise, since $|Q\cap \Omega|\geq \frac{n}{24}$ and $\Omega \subset \cap_{i=1}^4 H_i^+$,
we have  $|H_i^- \cap Q |<\frac{n}{24}$ for all $i\in [4]$.
From $\cup_{i=1}^4 H_i^- \cup \Omega^o=\mathbb{R}^3$ and \eqref{size of Q cap widetilde{Omega} in R3} we have
\begin{equation} \label{a contradiction}
\begin{split}
n=&|Q|=|\mathbb{R}^3\cap Q|=|(\cup_{i=1}^4H_i^- \cup \Omega^o)\cap Q|\leq
\sum_{i=1}^4|H_1^-\cap Q|+|\Omega^o\cap Q|\\
\leq&\sum_{i=1}^4|H_1^-\cap Q|+|\widetilde{\Omega}\cap Q|
\leq \frac{4n}{24}+0.67n<n,
\end{split}
\end{equation}
which is a contradiction.
Therefore,  in $\{x_1-x_2=0, x_1-x_3=0, x_2=0, x_3=0\}$ there exists at lease one plane such that any closed half space with this line as the boundary contains at least $\frac{n}{24}$ points of $Q$.

The second step is to obtain three sets of planes which have the same structure with $\Eu{H}_3$, and this can be done
through orthogonal transformation.
Since a plane through the origin can be uniquely determined by its normal vector, we can use normal vectors
$V_3=\{(1,0,0),(0,1,0),(0,0,1),(1,\pm1,0),(0,1,\pm1), (\pm1,0,1)\}$
to represent planes in $\Eu{H}_3$. Then, we choose three orthogonal matrices $M_1$, $M_2$, $M_3$ and define
$V_{3}(M_i)=\{vM_i|\ v\in V_3\}$ for $i=1,2,3$. One set of feasible orthogonal matrices is $\{M_i|\ M_i=I_3-2u_{3,i}u_{3,i}^T, \text{ for } i=1,2,3\}$, where $I_3$ is a $3\times 3$ identity matrix, and $u_{3,i}=(1^i,2^i,3^i)^T$ is a column vector.
It can be verified that $\min_{v_i\in V_{3}(M_i), \forall\ i\in[3]} |\textbf{Det}([v_1; v_2; v_3])|\geq  4.8468 \times 10^{-4}$,
where $[v_1; v_2; v_3]$ is a $3\times3$ matrix and $v_i$ is its $i^{\text{th}}$ row.
This means if we arbitrarily choose three vectors $v_1,v_2,v_3$ from $V_{3}(M_1)$, $V_{3}(M_2)$ and $V_{3}(M_3)$ respectively, then these three vectors are linearly independent, so the three planes determined by these vectors can form an oblique coordinates system. We can use the method in the proof of Theorem \ref{theorem the epsilon support for L_1 median in R^2}, to generate a set $T(v_1,v_2,v_3)$ with size $O(C_{[v_1;v_2;v_3]}\frac{k^3}{\eps^3})$ in this oblique coordinate system, where $C_{[v_1;v_2;v_3]}$ is a constant determined by $|\textbf{Det} ([ v 1 ; v 2 ; v 3 ])|$.
For the three orthogonal matrices we chose above, $|\textbf{Det} ([ v 1 ; v 2 ; v 3 ])|$ has a lower bound, so the
constant $C_{[v_1;v_2;v_3]}$ has an upper bound, which implies $O\left(C_{[v_1;v_2;v_3]}\frac{k^3}{\eps^3}\right)=O\left(\frac{k^3}{\eps^3}\right)$.

For any $L_1$ median $p$ of $n$ points $Q$ and any $V_{3}(M_i)$ there must be a plane through $p$ and orthogonal to a vector in $V_{3}(M_i)$ such that in both sides of this plane there are at least $\frac{n}{24}$ points of $Q$.
So, there exist $(v_1,v_2,v_3)\in V_{3}(M_1)\times V_{3}(M_2)\times V_{3}(M_3)$ and $x\in T(v_1,v_2,v_3)$  such that
$\|x-p\|\leq \eps \cost(p,Q)$. Therefore,  we can take $T=\cup_{v_i\in V_{3}(M_i), \forall\ i\in[3]} T(v_1,v_3,v_3)$ as an $\eps$-support for $L_1$ median on $\Eu{P}$
 with size $O(\frac{k^3}{\eps^3})$.
\end{proof}

In the proof of Theorem \ref{theorem the epsilon support for L_1 median in R^3},
we choose three orthogonal matrices $M_1$, $M_2$, $M_3$. These three matrices are independent from the input data $\Eu{P}$, so we can store these orthogonal matrices and use them to compute the $\eps$-support of $L_1$ median
for any $\Eu{P}$ in $\mathbb{R}^3$.

To generalize the result of Theorem \ref{theorem the epsilon support for L_1 median in R^3} to $\mathbb{R}^d$,
we can use $d+2 {n\choose 2} $ hyperplanes
$\Eu{H}_d=\{x_i=0 \mid i\in[d]\}\cup\{x_i\pm x_j=0 \mid \ 1\leq i< j\leq d\}$
 to partition
$\mathbb{R}^d$ into $d2^d$ regions: $\{\Omega_{i,\mathbf{s}} \mid i\in[d], \mathbf{s}\in\{1,-1\}^d\}$,
where $\Omega_{i,\mathbf{s}}=\Omega_{i,(s_1,\cdots,s_d)}:=\{(x_1,\cdots,x_d)\in \mathbb{R}^d \mid s_ix_i\geq s_jx_j\geq 0,\forall\  j\in[d]\}$. All of these regions have the same shape with
$\Omega_{1,(1,\cdots,1)}=\{(x_1,\cdots,x_d)\in \mathbb{R}^d \mid x_1\geq x_j\geq0,\text{for } j=2,\cdots,d \}$. Using the method in the proof of  Theorem \ref{theorem the epsilon support for L_1 median in R^3} we can show, if $p$ is the $L_1$ median of $n$ points $Q$ and is the origin, then there is a hyperplane $h$ in $\Eu{H}_d$ such that any half space with $h$ as the boundary contains at least $\frac{n}{d2^d}$ points of $Q$. (In $\mathbb{R}^d$, \eqref{a contradiction} will become $n\leq \frac{2(d-1)n}{d2^d}+\frac{\sqrt{d}n}{1+\sqrt{d}}\frac{d2^d+1}{d2^d}$, and it is easy to show the right side of this inequality is always less than $n$ for all $d\geq 3$, so the method in the proof of Theorem \ref{theorem the epsilon support for L_1 median in R^3} still works.)

Suppose $V_d$ is the collection of normal vectors of all hyperplanes in $\Eu{H}_d$.
We randomly choose a set of $d$-dimensional orthogonal matrices $\Eu{M}=\{M_1,\cdots,M_d\}$, and define $V_{d}(M_i)=\{vM_i \mid v\in V_d\}$ for $i=1,\cdots,d$. If $\min_{v_i\in V_{d}(M_i), \forall\ i\in[d]} \mid \textbf{Det}([v_1; \cdots; v_d])|\geq c_\Eu{M}>0$, where
$c_\Eu{M}$ is a positive constant dependent on $\Eu{M}$ and
$[v_1;\cdots; v_d]$ is a $d\times d$ matrix with $v_i$ as its $i$th row, then we can store these matrices, for each $(v_1,\cdots,v_d)\in V_{d}(M_1)\times \cdots\times V_{d}(M_d)$ build an oblique coordinate system, and use the method in Theorem \ref{theorem the epsilon support for L_1 median in R^2}, to generate a set $T(v_1,\cdots,v_d)$ with size $O(C_{[v_1;\cdots; v_d]}\frac{k^d}{\eps^d})=O(C_\Eu{M}\frac{k^d}{\eps^d})$, where $C_\Eu{M}$ is a positive constant dependent on $\Eu{M}$. Finally, we return
$T=\cup_{v_i\in V_{d}(M_i), \forall\ i\in[d]} T(v_1,\cdots,v_d)$ as an $\eps$-support for $L_1$ median on $\Eu{P}$,
and the size of $T$ is $|T|=O(C_\Eu{M}\frac{k^d}{\eps^d})=O(\frac{k^d}{\eps^d})$, since $\Eu{M}$ is fixed for all uncertain data in $\mathbb{R}^d$.

Since a $d$-dimensional orthogonal matrix has $d(d-1)/2$ independent variables,  we can always find
orthogonal matrices $M_1,\cdots,M_d$ and a constant $c_\Eu{M}$, such that
\[
\min_{v_i\in V_{d}(M_i), \forall\ i\in[d]}  |\textbf{Det}([v_1; \cdots; v_d])|\geq c_\Eu{M}>0,
\]
 and for fixed $d$,  $M_1,\cdots,M_d$ can be stored to deal with any input data $\Eu{P}$ in $\mathbb{R}^d$.
For example, for $d=4$ we can define $M_i=I_4-2u_{4,i}u_{4,i}^T$, for $i=1,\cdots,4$, where $I_4$ is an identity matrix and $u_{4,i}=(1^i,2^i,3^i, 4^i)^T$, and it can be verified that
$\min_{v_i\in V_{4}(M_i), \forall\ i\in[4]} |\textbf{Det}([v_1; \cdots; v_4])|\geq 3.7649\times 10^{-6}$.

For $d=5$, we can define $M_i=I_5-2u_{5,i}u_{5,i}^T$, for $i=1,\cdots,5$, where $I_5$ is an identity matrix and $u_{5,i}=(1^i,2^i,3^i, 4^i, 5^i)^T$, and we have
$\min_{v_i\in V_{5}(M_i), \forall\ i\in[5]} |\textbf{Det}([v_1; \cdots; v_5])| \geq 2.3635\times 10^{-11}$.
In summary, we have the following theorem.

\begin{theorem}
\label{theorem the epsilon support for L_1 median in R^d}
Given a set of $n$ uncertain points $\Eu{P}=\{P_1,\cdots,P_n\}$, where $P_i=\{p_{i,1},\cdots,p_{i,k}\}$ $\subset\mathbb{R}^d$,
 and $\eps \in(0,1]$, for and fixed $d$ we can construct an $\eps$-support $T$ for $L_1$ median on $\Eu{P}$ that has a size $|T|=O\left(\frac{k^d}{\eps^d}\right)$.
\end{theorem}

\newpage

\bibliography{p16-buchin}

\end{document}